  \theoremstyle{definition}
  \newtheorem{defn}{\protect\definitionname}
  \theoremstyle{remark}
  \newtheorem{rem}{\protect\remarkname}
\theoremstyle{plain}
\newtheorem{thm}{\protect\theoremname}
  \theoremstyle{plain}
  \newtheorem{cor}{\protect\corollaryname}
 \theoremstyle{definition}
  \newtheorem{example}{\protect\examplename}
  \theoremstyle{plain}
  \newtheorem{lem}{\protect\lemmaname}
\title{On Optimal Reinsurace Policy}
\author{Hirbod Assa \footnote{Mailing address: Institute for Financial and Actuarial Mathematics. University of Liverpool. UK. Email: assa@liverpool.ac.uk}\\
University of Liverpool \and
}
  \providecommand{\corollaryname}{Corollary}
  \providecommand{\remarkname}{Remark}
\providecommand{\theoremname}{Theorem}
  \providecommand{\corollaryname}{Corollary}
  \providecommand{\definitionname}{Definition}
  \providecommand{\examplename}{Example}
  \providecommand{\lemmaname}{Lemma}
  \providecommand{\remarkname}{Remark}
\providecommand{\theoremname}{Theorem}
  \providecommand{\definitionname}{Definition}
  \providecommand{\examplename}{Example}
  \providecommand{\lemmaname}{Lemma}
  \providecommand{\remarkname}{Remark}
\providecommand{\corollaryname}{Corollary}
\providecommand{\theoremname}{Theorem}
\begin{document}

\title{Optimal risk allocation in a market with non-convex preferences}
\maketitle
\begin{abstract}
\noindent The aims of this study are twofold. First, we consider an
optimal risk allocation problem with non-convex preferences. By establishing
an infimal representation for distortion risk measures, we give some
necessary and sufficient conditions for the existence of optimal and
asymptotic optimal allocations. We will show that, similar to a market
with convex preferences, in a non-convex framework with distortion
risk measures the boundedness of the optimal risk allocation problem
depends only on the preferences. Second, we consider the same optimal
allocation problem by adding a further assumption that allocations
are co-monotone. We characterize the co-monotone optimal risk allocations
within which we prove the “marginal risk allocations” take only the
values zero or one. Remarkably, we can separate the role of the market
preferences and the total risk in our representation.
\end{abstract}

\section{Introduction}

There is considerable interest in the problem of optimal risk allocation,
as it is at the heart of many financial and insurance applications.
Optimal risk sharing, optimal capital allocation, theory of market
equilibrium, optimal reinsurance design and optimal risk exchange
are only a few examples. This problem dates back to the 50s and 60s
when \citet{Allais:1953}, \citet{Arrow:1964}, \citet{Sharpe:1964},
\citet{Borch:1960}, \citet{Mossin:1966} and many others studied
the optimal risk allocations for different economic problems. Thereafter,
researchers started to elaborate further on the aspects of this problem
for a variety of assumptions. By development of risk measures and
their applications in finance and insurance, the problem of optimal
risk allocation has been revisited by using coherent risk measures
of \citet{Artzner/Delbaen/Eber/Heath:1999}, convex risk measures
of \citet{Follmer/Schied:2002(1)} and deviation measures of risk
of \citet{Rockafellar/Uryasev/Zabarankin:2006}. The first attempt
to study the problem in a setting with coherent risk measures was
\citet{Heath/Ku:2004}, where the authors established a necessary
and sufficient condition for the existence of a Pareto optimal allocation.
\citet{Barrieu/ElKaroui:2004} considered a risk sharing problem in
a dynamic setup, whereas \citet{Jouini/Schachermayer/Touzi:2008}
considered a static framework with law-invariant convex risk measures.
\citet{Filipovic/Kupper:2008} looked at the optimal risk allocation
problem from a pricing point of view, while \citet{Filipovic/Kupper:2008(1)}
considered it for optimal capital allocations. \citet{Acciaio:2007}
studied a sharing pooled risk problem with non-necessarily monotone
monetary utilities. While there is extensive research on the problem
of optimal risk allocation with convex preferences, studies using
non-convex framework have been relatively scarce, whereas in many
applications preferences are not convex, and the results of the existing
settings cannot be applied to them. This is mainly due to the lack
of appropriate mathematical techniques to study models with the non-convex
preferences. 

In this paper, by establishing an infimal representation for distortion
risk measures, we find a new way to study the optimal risk allocation
problem with non-convex preferences. We prove that the boundedness
of the optimal risk allocation problem is independent of the total
risk and only depends on the market preferences. The approach we have
chosen is a finance oriented approach which gives rise to the definition
of generalized stochastic discount factors for non-convex preferences
(see Remark 2 below). Our results generalize results of \citet{Jouini/Schachermayer/Touzi:2008}
, \citet{Filipovic/Kupper:2008} and \citet{Filipovic/Kupper:2008(1)}
towards a new direction by using non-convex risk measures. This constitutes
the first part of the paper. In the second part, with an extra assumption
that the risk allocations are co-monotone, we characterize the optimal
risk allocations in the same market. This assumption can be interpreted
as mutualization of risks, which is closely related to the moral hazard
risk%
\footnote{ In order to avoid the moral hazard risk, allocations have to be increasing
in terms of market risks.%
}. Interestingly, we see that the optimal risk allocations in a setting
with distortion risk measures are in a perfect accordance with this
assumption. It is shown in \citet{Filipovic/Svindland:2008(1)} that
the solutions to a general market risk allocation problem with convex
distortion risk measures are co-monotone, and therefore, rule out
the risk of moral hazard. However, we will see within an example that
this no longer holds true when agents use non-convex distortion risk
measures. That is why we have to assume that the allocations are co-monotone.
In order to characterize the co-monotone optimal solutions, we introduce
the “marginal risk allocations”. A marginal risk allocation is the
marginal rate of changes in the value of a contract when we marginally
change the value of the total risk. It is shown that, in a market
with co-monotone risk allocations, the marginal risk allocations take
only the values zero or one. This way, we can remarkably separate
the role of the market preferences and the total risk in the optimal
risk allocations. Our results find a new characterization of the optimal
allocations in \citet{Chateauneuf/Dana/Tallon:2000} enabling us for
more precise interpretation of the optimal allocations and also finding
further applications in other fields such as the optimal re-insurance
design. This paper generalize the literature of optimal re-insurance
design in two directions. First, we use a larger family of (non-convex)
risk measures and premiums and second, we increase the number of players
from two to $n$ (e.g. see, \citet{Cai/Tan/Weng/Zhang:2008}, \citet{Cheung:2010},\citet{Chi:2012},
\citet{Chi:2012(2)}, \citet{Chi/Tan:2013} , \citet{Cheung/Sung/Yam/Yung:2014}
and \citet{Assa:2015}). 

The rest of the paper is organized as follows: in Section 2 we introduce
the needed notions and notations, and recall some facts from convex
analysis. In Section 3, first, we set up the main problem, second,
we discuss some necessary and sufficient conditions for the existence
of general solutions, and, third, we characterize the co-monotone
optimal solutions.

\section{Preliminaries and Notations}

Throughout the paper, we will fix a probability space $\left(\Omega,\mathcal{F},P\right)$,
where $\mathcal{F}$ is a $\sigma$-algebra and $P$ is a probability
measure on $\mathcal{F}$. Let $p,q\in[1,\infty]$ be two numbers
such that $1/p+1/q=1$. For $p\not=\infty$, $L^{p}$ denotes the
space of real-valued random variables $X$ on $\Omega$ such that
$E\left(\left|X\right|^{p}\right)<\infty$, where $E$ represents
the mathematical expectation. Recall that according to the Riesz Representation
Theorem, $L^{q}$ is the dual space of $L^{p}$ when $p\neq\infty$.
We endow the space $L^{p}$ with two topologies, first the norm topology
induced by $\Vert X\Vert_{p}=E(\vert X\vert^{p})^{\frac{1}{p}}$,
and second the weak topology, induced by $L^{q}$ i.e. the coarsest
topology in which all members of $L^{q}$ are continuous. As usual
the latter topology is denoted by $\sigma(L^{p},L^{q})$. 

In this paper we consider that $L^{p}$ represents the space of all
loss variables%
\footnote{Unlike finance literature which consider profit variable, we found
the loss variable more convenient to deal with. %
}. We only have two periods of time $0$ and $T$, representing the
beginning of the year when a contract is written, and the end the
year when liabilities are settled, respectively. Every random variable
represents losses at time $T$. Whenever we talk about risk or premium
we mean the present value of the loss and the premium at time $T=0$.

\subsection{Distortion Risk Measures}

\textcolor{black}{Let $\Phi:[0,1]\to[0,1]$ be a non-decreasing and
c\'adl\'ag function such that $\Phi(0)=1-\Phi(1)=0$. $\Phi$ can
introduce a measure on $\left[0,1\right]$ whose values on the intervals
are given as $m_{\Phi}\left[a,b\right)=\Phi\left(b\right)-\Phi\left(a\right)$
and $m_{\Phi}\left(b\right)=1-\lim_{a\uparrow1}\Phi\left(a\right)$.
Introduce the set $\mathcal{D}_{\Phi}$ as follows}

\textcolor{black}{
\begin{equation}
\mathcal{D}_{\Phi}=\left\{ X\in L^{0}\mid\int_{0}^{1}\mathrm{\mathrm{VaR}}_{t}(X)d\Phi(t)\in\mathbb{R}\right\} ,
\end{equation}
where the integral above is the Lebesgue integral and 
\[
\mathrm{\mathrm{VaR}}_{\alpha}(X)=\inf\{x\in\mathbb{R}|P(X>x)\le1-\alpha\},\alpha\in[0,1].
\]
}
\begin{defn}
\textcolor{black}{A distortion risk measure $\varrho_{\Phi}$ (or
simply $\varrho$) is a mapping from $\mathcal{D}_{\Phi}$ to $\mathbb{R}$
defined as }
\begin{equation}
\varrho_{\Phi}(X)=\int_{0}^{1}\mathrm{\mathrm{VaR}}_{t}(X)d\Phi(t),\label{spectral_risk}
\end{equation}
\textcolor{black}{If we let $g(x):=1-\Phi(1-x)$ one can see that
\begin{equation}
\varrho_{\Phi}(X)=\int\limits _{-\infty}^{0}\left(g(S_{X}(t))-1\right)dt+\int\limits _{0}^{\infty}g(S_{X}(t))dt,\label{eq:distortion}
\end{equation}
where $S_{X}=1-F_{X}$ is the survival function associated with $X$.
Note that we can associate $\varrho$ with $\Phi$ by using the notation
$\Phi_{\varrho}$. This is a Choquet integral representation of the
risk measure. In the literature, $g$ is known as the distortion function.}
A popular example is Value at Risk (VaR), whose distortion function
is given by $g(t)=1_{[1-\alpha,1]}(t)$ for a confidence level $1-\alpha$.
It can also explicitly be given as 
\[
\mathrm{\mathrm{VaR}}_{\alpha}(X)=\inf\{x\in\mathbb{R}|P(X>x)\le1-\alpha\}.
\]
Another example of a distortion risk measure is Conditional Value
at Risk (CVaR), when $\Phi(t)=\frac{t-\alpha}{1-\alpha}1_{[\alpha,1]}(t)$
and can be represented in terms of VaR 
\begin{equation}
\mathrm{CVaR}_{\alpha}(x)=\frac{1}{1-\alpha}\int_{\alpha}^{1}\mathrm{\mathrm{VaR}}_{t}(X)dt.\label{eq:CVaR}
\end{equation}
The family of spectral risk measures which was introduced first in
\citet{Acerbi:2002}, is a distortion risk when $\Phi$ is convex.\end{defn}
\begin{rem}
One can readily see that $\varrho_{\Phi}$ is law invariant, i.e.,
if $X$ and $X^{\prime}$ are identically distributed, then we have
$\varrho_{\Phi}(X)=\varrho_{\Phi}(X^{\prime})$. Indeed, it can be
shown that all law-invariant co-monotone additive coherent risk measures
can be represented as \eqref{spectral_risk}; see \citet{Kusuoka:2001}.
A risk measure in the form \eqref{spectral_risk} is important from
different perspectives. First of all, it makes a link between the
risk measure theory and the behavioral finance as the form \eqref{spectral_risk}
is a particular form of distortion utility. Second, \eqref{spectral_risk}
contains a family of risk measures which are statistically robust.
In \citet{Cont/Deguest/Scandolo:2010} it is shown that a risk measure
$\varrho(x)=\int_{0}^{1}\mathrm{\mathrm{VaR}}_{t}(x)d\Phi(t)$ is
robust if and only if the support of $\varphi=\frac{d\Phi(t)}{dt}$%
\footnote{$\varphi$ is a general derivative of $\Phi$.%
} is away from zero or one. For example Value at Risk is a risk measure
with this property. distortion utilities have become increasingly
important in the literature of decision making since they take into
account some known behavioral paradoxes such as the Allais paradox
under risk and the Ellsberg paradox under uncertainty. \citet{Schmeidler:1989}
(under uncertainty) and \citet{Quiggin:1982} and \citet{Yaari:1984},
\citet{Yaari:1986} (under risk) show by assuming co-monotone independence,
preferences are according to utilities which admit a distortion integral
representation. It is worth mentioning that, distortion integrals
have become very popular in the literature of insurance because they
are the natural extensions of important insurance risk premiums such
as Proportional Hazards Premium Principle, Wang’s Premium Principle
and Net Premium Principle (see \citet{Wang/Young/Panjer:2006} and
\citet{Young:2006}).
\end{rem}
Finally, we have the definition of a coherent risk measure
\begin{defn}
A coherent risk measure $\varrho$ is a lower semi-continuous%
\footnote{The risk measure in general does not need to be lower semi-continuous
in $L^{\infty}$, however we add it to be consistent with $L^{p},p\not=\infty$.%
}(see below for definition of lower semi-continuous) mapping from $L^{p}$
to $\mathbb{{R}}\cup\left\{ +\infty\right\} $ such that\end{defn}
\begin{enumerate}
\item $\varrho(\lambda X)=\lambda\varrho(X)$, for all $\lambda>0$ and
$X\in L^{p}$;
\item $\varrho(X+c)=\varrho(X)+c$, for all $X\in L^{p}$ and $c\in\mathbb{R}$;
\item $\varrho(X)\leq\varrho(Y)$, for all $X,Y\in L^{p}$ and $X\le Y$;
\item $\varrho(X+Y)\leq\varrho(X)+\varrho(Y),\forall X,Y\in L^{p};$
\end{enumerate}
As one can see, a coherent risk measure is positive homogeneous. As
we will see in the next section, there is a closed and convex subset
$\Delta_{\varrho}\subseteq L^{q}$, such that $\varrho(X)=\sup{}_{Y\in\Delta_{\varrho}}E(YX)$.
One can show that for any $Y\in\Delta_{\varrho}$, we have $E(Y)=1$
and $Y\ge0$.

\subsection{Some Facts from Convex Analysis}

Here we recall some relevant discussions from the convex analysis.
Recalling from the convex analysis, for any convex function $\phi$,
the domain of $\phi$ denoted by $\mathrm{dom(\phi)}$ is equal to
$\left\{ X\in L^{p}|\phi(X)<\infty\right\} $, and the dual of $\phi$,
denoted by $\phi^{*}$, is defined as $\phi^{*}(Y)=\sup_{X\in L^{p}}E(XY)-\phi(X)$.
A convex function is called lower-semi-continuous iff $\phi=\phi^{**}$.
In this paper, we assume all convex functions are lower semi continuous.
For a convex set $C\subseteq L^{p},$ the indicator function of $C$
is denoted by $\chi_{C}$ and is equal to $0$ if $X\in C$, and $+\infty$,
otherwise. One can incorporate any type of convex restriction by using
an appropriate indicator function. Let $C$ be a closed and convex
set representing a convex restriction on $\phi$. By introducing $\phi^{C}=\phi+\chi_{C}$
we incorporate the restriction $C$. Note that $\phi^{C}$ is a convex
function.

For any positive homogeneous convex function $\phi$ let

\[
\Delta_{\phi}=\left\{ Y\in L^{q}|E(YX)\le\phi(X),\forall X\in L^{p}\right\} .
\]
It is easy to see that $\phi^{*}=\chi_{\Delta_{\phi}}$. Therefore,
any positive homogeneous function $\phi$ can be represented as $\phi(X)=\sup\limits _{Y\in\Delta_{\phi}}E(YX)$.
By using this and that $\phi=\phi^{**}$, one can easily see that
for any convex set $C$, $\chi_{C}^{*}(Y)=\sup\limits _{X\in C}E(YX)$.
As a result, if $C$ is a convex cone and $\phi$ is a positive homogeneous
convex function then $\phi^{C}(X)=\sup\limits _{Y\in\Delta_{\phi}+C^{\perp}}E(YX)$,
where $C^{\perp}=\left\{ Y\in L^{q};E(YX)\le0,\forall X\in C\right\} $
(or $\Delta_{\phi^{C}}=\Delta+C^{\perp}$). A particular interesting
example is $C=L_{+}^{p}$ when $C^{\perp}=L_{-}^{q}$.

For a set of convex functions $\phi_{1},...,\phi_{n}$ their infimal
convolution is defined as 
\begin{equation}
\phi_{1}\square...\square\phi_{n}(X)=\inf_{X_{1}+...+X_{n}=X}\phi_{1}(X_{1})+...+\phi_{n}(X_{n}).\label{eq:conolution_general}
\end{equation}
In \citet{Rockafellar/book:1997} Theorems 5.4 and 16.4 it is shown
that $(\phi_{1}\square...\square\phi_{n})^{*}=\phi_{1}^{*}+...+\phi_{n}^{*}$.
By using the arguments above one can easily see that if $\phi_{1},...,\phi_{n}$
are positive homogeneous then $\phi_{1}\square...\square\phi_{n}(X)=\sup\limits _{Y\in\cap_{i}\Delta_{\phi_{i}}}E(YX)$.
As a result
\begin{thm}
\label{Thm:boundedness}The infimum in the infimal convolution is
bounded if and only if $\cap_{i}\Delta_{\phi_{i}}\not=\varnothing$.
\end{thm}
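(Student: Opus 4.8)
The plan is to read the statement off directly from the representation established in the paragraph immediately preceding the theorem, namely
\[
\phi_1\square\cdots\square\phi_n(X)=\sup_{Y\in\cap_i\Delta_{\phi_i}}E(YX),
\]
interpreting ``the infimum is bounded'' as the requirement that the infimal convolution be bounded below, i.e. $\phi_1\square\cdots\square\phi_n(X)>-\infty$. Once this identity is in hand, both implications follow from the elementary behaviour of a supremum taken over a possibly empty index set, and in fact the property is uniform in $X$: the infimal convolution is either $>-\infty$ at every $X$ or equal to $-\infty$ at every $X$.

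For the ``if'' direction I would assume $\cap_i\Delta_{\phi_i}\neq\varnothing$ and fix any $Y_0$ in this intersection. For an arbitrary decomposition $X=X_1+\cdots+X_n$, the membership $Y_0\in\Delta_{\phi_i}$ gives $E(Y_0 X_i)\le\phi_i(X_i)$ for each $i$, and summing yields
\[
\sum_{i}\phi_i(X_i)\ge\sum_i E(Y_0 X_i)=E\Big(Y_0\sum_i X_i\Big)=E(Y_0X).
\]
Taking the infimum over all decompositions gives $\phi_1\square\cdots\square\phi_n(X)\ge E(Y_0X)$, and $E(Y_0X)$ is finite by H\"older's inequality since $Y_0\in L^q$ and $X\in L^p$. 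Hence the infimal convolution is bounded below. Note that this direction uses only the trivial inequality underlying the representation and does not require the full duality.

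For the ``only if'' direction I would argue by contraposition: if $\cap_i\Delta_{\phi_i}=\varnothing$, then the supremum in the representation is taken over the empty set and therefore equals $-\infty$, so $\phi_1\square\cdots\square\phi_n(X)=-\infty$ for every $X$ and the infimum is unbounded. Equivalently, from the conjugate computation $(\phi_1\square\cdots\square\phi_n)^*=\sum_i\phi_i^*=\sum_i\chi_{\Delta_{\phi_i}}=\chi_{\cap_i\Delta_{\phi_i}}$, an empty intersection forces this conjugate to be identically $+\infty$, whose conjugate is identically $-\infty$.

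The genuine content, and the only delicate point, lies in the representation itself, specifically in the reverse inequality $\phi_1\square\cdots\square\phi_n(X)\le\sup_{Y\in\cap_i\Delta_{\phi_i}}E(YX)$. This is where Theorems 5.4 and 16.4 of \citet{Rockafellar/book:1997} enter, supplying $(\phi_1\square\cdots\square\phi_n)^*=\sum_i\phi_i^*$, together with positive homogeneity, which guarantees $\phi_i^*=\chi_{\Delta_{\phi_i}}$ and allows Fenchel--Moreau biconjugation to recover the infimal convolution from its conjugate. I expect the boundedness equivalence to be immediate once that representation is granted; the main obstacle is entirely in establishing the equality rather than the easy $\ge$ inequality, which is precisely why the positive-homogeneity hypothesis cannot be dropped.
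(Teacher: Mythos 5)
Your proposal is correct and follows essentially the same route as the paper, which offers no separate argument but presents the theorem as an immediate consequence (``As a result'') of the representation $\phi_{1}\square\cdots\square\phi_{n}(X)=\sup_{Y\in\cap_{i}\Delta_{\phi_{i}}}E(YX)$ derived from Theorems 5.4 and 16.4 of Rockafellar. You merely make explicit the two elementary deductions from that identity (a nonempty intersection yields the finite lower bound $E(Y_{0}X)$; an empty intersection makes the supremum $-\infty$) and correctly locate the only nontrivial content in the representation itself, exactly where the paper places it.
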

Another classical result is the following
\begin{thm}
\label{Thm:attainable}Assume that $\phi_{1},...,\phi_{n}$ are $n$
positive homogenous convex function. The following two statements
are equivalent \end{thm}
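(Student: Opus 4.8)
The plan is to read both halves of the equivalence through the dual representation $\phi_{1}\square\cdots\square\phi_{n}(X)=\sup_{Y\in\cap_{i}\Delta_{\phi_{i}}}E(YX)$ derived just above, combined with the boundedness criterion of Theorem~\ref{Thm:boundedness}, and to convert attainment of the primal infimum into a Fenchel--Young (common supporting functional) statement on the dual side. First I would record the one-line inequality that drives everything: for every $Y\in\cap_{i}\Delta_{\phi_{i}}$ and every decomposition $X_{1}+\cdots+X_{n}=X$ we have $\sum_{i}\phi_{i}(X_{i})\ge\sum_{i}E(YX_{i})=E(YX)$, since $E(YX_{i})\le\phi_{i}(X_{i})$ by the very definition of $\Delta_{\phi_{i}}$. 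Passing to the infimum over decompositions and the supremum over $Y$ recovers the representation and, more importantly, shows that a decomposition is optimal \emph{precisely} when some single $Y^{\star}\in\cap_{i}\Delta_{\phi_{i}}$ makes all these inequalities simultaneously tight, i.e.\ $\phi_{i}(X_{i})=E(Y^{\star}X_{i})$ for each $i$. Because each $\phi_{i}$ is positively homogeneous, this tightness is exactly the assertion $Y^{\star}\in\partial\phi_{i}(X_{i})$, so attainment of the infimal convolution is equivalent to the existence of a common subgradient $Y^{\star}\in\cap_{i}\partial\phi_{i}(X_{i})$ for some feasible splitting of $X$.

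With this reformulation in hand, both implications reduce to squeezing a chain of inequalities. For the direction that attainment forces the dual condition, I would take an optimal allocation $(X_{1},\dots,X_{n})$ together with a maximizer $Y^{\star}$ of $E(\cdot\,X)$ over $\cap_{i}\Delta_{\phi_{i}}$; the chain $E(Y^{\star}X)=\sum_{i}E(Y^{\star}X_{i})\le\sum_{i}\phi_{i}(X_{i})=\phi_{1}\square\cdots\square\phi_{n}(X)=E(Y^{\star}X)$ collapses, forcing coordinatewise equality and hence $Y^{\star}\in\cap_{i}\partial\phi_{i}(X_{i})$. For the converse I would simply run the chain backwards: the Fenchel equalities sum to $\sum_{i}\phi_{i}(X_{i})=E(Y^{\star}X)\le\phi_{1}\square\cdots\square\phi_{n}(X)\le\sum_{i}\phi_{i}(X_{i})$, which both certifies optimality of the allocation and pins down the optimal value. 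Theorem~\ref{Thm:boundedness} supplies the standing nonemptiness $\cap_{i}\Delta_{\phi_{i}}\neq\varnothing$ without which neither side is even finite.

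The step I expect to be genuinely delicate is the production of the maximizer $Y^{\star}$ on the dual side together with the guarantee that it supports every piece, which is equivalent to the subdifferential sum rule $\partial(\phi_{1}\square\cdots\square\phi_{n})=\cap_{i}\partial\phi_{i}$ holding at the relevant points and to the \emph{absence of a duality gap} between the primal infimum and the dual supremum. In finite dimensions this is precisely Rockafellar's exactness machinery behind Theorems~5.4 and 16.4, but in $L^{p}$ it needs a genuine compactness or qualification input: the sets $\Delta_{\phi_{i}}\subseteq L^{q}$ are convex and, by the standing lower semicontinuity, weak-$*$ closed, so I would aim to get attainment of the dual supremum from weak-$*$ compactness of $\cap_{i}\Delta_{\phi_{i}}$ (via Banach--Alaoglu in the $p=1,\,q=\infty$ pairing, after establishing norm-boundedness), and to rule out the primal minimizing allocations escaping to infinity through either a recession-cone condition or a nonempty-intersection-of-relative-interiors transversality hypothesis. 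Isolating the exact qualification under which this compactness holds is where the real work lies, and it is the infinite-dimensional analogue of the constraint qualification that Rockafellar's finite-dimensional proof quietly invokes.
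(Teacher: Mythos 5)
First, note that the paper itself offers no argument for this theorem: it simply refers the reader to \citet{Jouini/Schachermayer/Touzi:2008}. Your proposal follows exactly the route that standard reference takes (Fenchel--Young tightness against the dual representation $\phi_{1}\square\cdots\square\phi_{n}(X)=\sup_{Y\in\cap_{i}\Delta_{\phi_{i}}}E(YX)$), and your backward implication is complete as written: if $Y^{\star}\in\cap_{i}\Delta_{\phi_{i}}$ and $\phi_{i}(X_{i})=E(Y^{\star}X_{i})$ for all $i$, the sandwich $\sum_{i}\phi_{i}(X_{i})=E(Y^{\star}X)\le\sup_{Y}E(YX)\le\sum_{i}\phi_{i}(X_{i})$ certifies optimality. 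One caveat you silently fix and should make explicit: as literally printed, statement (2) only asks for \emph{some} $Y\in L^{q}$ with $\phi_{i}(X_{i})=E(YX_{i})$, and without the extra requirement $Y\in\cap_{i}\Delta_{\phi_{i}}$ (equivalently $Y\in\partial\phi_{i}(X_{i})$) the implication $(2)\Rightarrow(1)$ is false --- two linear equations on $Y$ are generically solvable even when $\cap_{i}\Delta_{\phi_{i}}=\varnothing$ and the infimal convolution is $-\infty$. Your reformulation via subgradients is the correct reading of the statement.

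The genuine gap is in the forward implication, and you have located it but not closed it: your chain of equalities needs both (a) the absence of a duality gap, i.e.\ $\phi_{1}\square\cdots\square\phi_{n}(X)=\sup_{Y\in\cap_{i}\Delta_{\phi_{i}}}E(YX)$, and (b) \emph{attainment} of that supremum. Item (a) is imported from the finite-dimensional Theorems 5.4 and 16.4 of Rockafellar without an infinite-dimensional justification (the paper does the same, so I will not press it), but item (b) is where your sketch stops. Your appeal to Banach--Alaoglu ``in the $p=1,\ q=\infty$ pairing'' is both too narrow and not yet an argument: the theorem is stated on $L^{p}$ for general $p$, and you still owe the norm-boundedness of $\cap_{i}\Delta_{\phi_{i}}$. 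The clean way to finish, under the implicit hypothesis that each $\phi_{i}$ is real-valued on all of $L^{p}$, is: a finite l.s.c.\ convex function on a Banach space is norm-continuous, hence bounded above on a ball, hence each $\Delta_{\phi_{i}}$ is norm-bounded in $L^{q}=(L^{p})^{*}$ and $\sigma(L^{q},L^{p})$-closed, therefore $\sigma(L^{q},L^{p})$-compact by Alaoglu for every $p\in[1,\infty)$; the weak-$*$ continuous functional $Y\mapsto E(YX)$ then attains its supremum on the (nonempty, by Theorem~\ref{Thm:boundedness}) compact intersection. If instead the $\phi_{i}$ are allowed to take the value $+\infty$ (as the paper's $\phi^{M_{i}}=\phi_{i}+\chi_{M_{i}}$ certainly are), dual attainment can genuinely fail and the forward implication requires additional structure --- this is precisely why \citet{Jouini/Schachermayer/Touzi:2008} work with law-invariant functionals on $L^{\infty}$ and need a nontrivial argument there. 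As it stands, your proof is complete only for everywhere-finite $\phi_{i}$, and you should say so.
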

\begin{enumerate}
\item \textit{$(X_{1},...,X_{n})$ is an optimal allocation for $X$ i.e.,
$X_{1}+...+X_{n}=X\text{ and }\phi_{1}(X_{1})+...+\phi_{n}(X_{n})=\phi_{1}\square...\square\phi_{n}(X)$;}
\item \textit{There exists $Y\in L^{q}$ such that $\phi_{i}(X_{i})=E(YX_{i}),i=1,...,n$.}
\end{enumerate}
For a proof one can see \citet{Jouini/Schachermayer/Touzi:2008}.
Let $M_{1},...,M_{n}$ are $n$ convex and closed cones, subsets of
$L^{p},$ representing $n$ constraints that agents $1$ to $n$ face
in the economy. Then by replacing $\phi_{i}$ with $\phi^{M_{i}}$
in the above, we can consider the same setting which also incorporates
the economy constraint in the problem.

And finally the positive infimal convolution is denoted by $\varrho_{1}\boxdot...\boxdot\varrho_{n}$
as is defined as
\[
\varrho_{1}\boxdot...\boxdot\varrho_{n}(X)=\inf_{X_{1}+...+X_{n}=X,X_{i}\ge0,i=1,...,n}\phi_{1}(X_{1})+...+\phi_{n}(X_{n}).
\]

\section{Problem Set-up}

Let us assume there are $n$ different agents in the market whose
preferences are according to $n$ distortion risk measures $\varrho_{1},\dots,\varrho_{n}$.
We denote the associated kernels with $\Phi_{1,}...,\Phi_{n}$. The
risk of the whole market is modeled by a loss variable $X_{0}$. The
set of allocations denoted by $\mathbb{A}$ is defined as follows
\[
\mathbb{A}=\left\{ (X_{1},...,X_{n})\in\left(L^{p}\right)^{n}\vert X_{1}+...+X_{n}=X_{0}\right\} .
\]
An optimal allocation is an allocation which minimizes the aggregate
risk

\begin{equation}
\inf_{X_{1}+...+X_{n}=X_{0}}\varrho_{1}(X_{1})+...+\varrho_{n}(X_{n}),\label{eq:convolution_allocation}
\end{equation}
An asymptotic optimal allocation is a sequence $\left\{ (X_{1}^{m},...,X_{n}^{m})\right\} _{m=1,2,...}\subseteq\mathbb{A}$,
such that 

\begin{equation}
\varrho_{1}(X_{1}^{m})+...+\varrho_{n}(X_{n}^{m})\,\,\underrightarrow{\, m\to\infty\,}\,\,\inf_{X_{1}+...+X_{n}=X_{0}}\varrho_{1}(X_{1})+...+\varrho_{n}(X_{n}).\label{eq:convolution_allocation-asymptotic}
\end{equation}
It is clear that the existence of an asymptotic optimal allocation
is equivalent to the boundedness of \eqref{eq:convolution_allocation}.
For further development of the existing setting we have to consider
a wider problem 
\begin{equation}
\inf\limits _{(X_{1,}...,X_{n})\in\mathbb{A}}\lambda_{1}\varrho_{1}(X_{1})+...+\lambda_{n}\varrho_{n}(X_{n}),\label{eq:main_minimization-1}
\end{equation}
when $(\lambda_{1},...,\lambda_{n})$ is an arbitrary set of positive
numbers. For instance, Pareto allocations in a market whose agent
utilities are $-\varrho_{i},i=1,...,n$, are the solutions to this
problem. We will see that if there is no friction in the market, then
for any set of coherent risk measures $\varrho_{1},...,\varrho_{n}$,
$\lambda_{i}$'s should be equal. On the other hand, in (re-)insurance
studies, one can find a risk sharing problem which has very similar
components; $\varrho_{1}$ is a risk measure, measuring the ceding
company global risk, and $\varrho_{2}$ is a risk premium function,
pricing the reinsurance contracts. In this problem $\lambda_{1}=1$
and $\lambda_{2}=1+\rho$ is a relative safely loading parameter (for
more details see example below).

\subsection{General Solutions}

Our approach in this section is to reduce the risk allocation problem
to an inner problem which can be solved by the existing results in
the literature. Even though the general form of a distortion risk
is not a coherent risk measure, thanks to the following statement
we can use the convex analysis approaches to study \eqref{eq:main_minimization-1}.
\begin{thm}
(Infimal Characterization of distortion Risks)\label{por:1} Let 
\[
\varrho_{\Phi}(X)={\displaystyle \int\nolimits _{0}^{1}\mathrm{\mathrm{VaR}}_{s}(X)d\Phi(s)},
\]
for a non-decreasing function $\Phi$ as in Definition 1. If $\varrho_{\Phi}$
is $L^{p}$ continuous, and $X$ is bounded below, we have the following
equality 
\[
\varrho_{\Phi}(X)=\min\{\varrho(X)|\text{for all l.s.c. coherent risk measures }\varrho\text{ such that }\varrho\geq\varrho_{\Phi}\}.
\]
\end{thm}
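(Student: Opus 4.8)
The plan is to establish the nontrivial direction by producing, for the given $X$, one l.s.c. coherent risk measure that dominates $\varrho_\Phi$ on all of $L^p$ yet agrees with it at $X$. Combined with the trivial remark that $\varrho\geq\varrho_\Phi$ forces $\varrho(X)\geq\varrho_\Phi(X)$, and with the fact that the competing family is nonempty (the worst-case functional $Z\mapsto\operatorname{ess\,sup}Z$ is l.s.c., coherent, and dominates every distortion risk measure), this shows the infimum equals $\varrho_\Phi(X)$ and is attained, which is exactly the asserted $\min$.

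The device is the closed convex set
\[
\Delta=\{Y\in L^q\mid Y\ge0,\ E(Y)=1,\ E(YX)\le\varrho_\Phi(X)\},
\]
which caps the exposure of $X$ at exactly $\varrho_\Phi(X)$ while leaving every other direction free. Its support functional is, by Lagrangian duality, the positively homogeneous map
\[
\varrho^{*}(Z):=\inf_{\lambda\ge0}\Big[\lambda\,\varrho_\Phi(X)+\operatorname{ess\,sup}(Z-\lambda X)\Big],\qquad Z\in L^p,
\]
and I would work with this $\inf_\lambda$ form, since it makes the estimates transparent and sidesteps any appeal to \emph{strong} duality. A direct check shows $\varrho^{*}$ is sublinear, monotone, positively homogeneous (rescale $\lambda$) and translation invariant. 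The conceptual obstacle worth flagging here is \emph{why such a non-comonotone construction is needed}: the obvious candidates, namely the law-invariant coherent majorants coming from the concave envelope of the distortion function, strictly overshoot $\varrho_\Phi$ at a generic $X$ (they are tight only when the quantile function of $X$ is supported where $\Phi$ meets its convex minorant), so no comonotone coherent measure can touch $\varrho_\Phi$ at $X$; the set $\Delta$ is precisely what repairs this.

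The crux is global domination $\varrho^{*}\ge\varrho_\Phi$, and this is where the argument really lives. For each $\lambda\ge0$ write $M_\lambda=\operatorname{ess\,sup}(Z-\lambda X)$; then $Z\le\lambda X+M_\lambda$ almost surely, and applying the monotone, positively homogeneous, translation-invariant map $\varrho_\Phi$ gives $\varrho_\Phi(Z)\le\lambda\,\varrho_\Phi(X)+M_\lambda$. Taking the infimum over $\lambda$ yields $\varrho_\Phi(Z)\le\varrho^{*}(Z)$. The point is that this replaces \emph{subadditivity} of $\varrho_\Phi$ (which fails in the non-convex case) by the elementary order relation $Z\le\lambda X+M_\lambda$, for which only monotonicity, homogeneity and translation invariance are required. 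Tightness at $X$ is then automatic: taking $\lambda=1$ gives $\varrho^{*}(X)\le\varrho_\Phi(X)+\operatorname{ess\,sup}(X-X)=\varrho_\Phi(X)$, while domination supplies the reverse inequality, so $\varrho^{*}(X)=\varrho_\Phi(X)$.

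Finally I would pass to the lower-semicontinuous hull $\overline{\varrho^{*}}$, the largest l.s.c. function below $\varrho^{*}$, in order to land inside the class of \emph{l.s.c.} coherent risk measures (the raw $\inf_\lambda$ form need not be l.s.c.). Because $\varrho_\Phi$ is $L^p$-continuous, hence l.s.c., and $\varrho_\Phi\le\varrho^{*}$, the function $\varrho_\Phi$ is itself one of the l.s.c. minorants of $\varrho^{*}$, whence $\varrho_\Phi\le\overline{\varrho^{*}}\le\varrho^{*}$ and in particular $\overline{\varrho^{*}}(X)=\varrho_\Phi(X)$. One checks $\overline{\varrho^{*}}$ remains sublinear, monotone and translation invariant — most cleanly by identifying it with the support functional of $\{Y\in L^q\mid E(YZ)\le\varrho^{*}(Z)\ \forall Z\}$, a set of densities with $Y\ge0$ and $E(Y)=1$ — so it is a bona fide l.s.c. coherent risk measure dominating $\varrho_\Phi$ and equal to it at $X$. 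This exhibits the minimizer and closes the proof; the $L^p$-continuity hypothesis enters exactly in the hull step, and boundedness below of $X$ is used to keep $\operatorname{ess\,sup}(Z-\lambda X)$ finite for bounded $Z$ and to guarantee $X\in\mathcal{D}_\Phi$, so that $\varrho_\Phi(X)\in\mathbb{R}$.
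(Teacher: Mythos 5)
Your proof is correct, and it takes a genuinely different route from the paper's. The paper proves the result in two stages: for $p=\infty$ it partitions $[0,1]$, invokes Delbaen's Theorem 6.8 to obtain, for the given $X$, coherent majorants of each $\mathrm{VaR}_{\alpha_i^k}$ that are tight at $X$, forms the corresponding Riemann sums and takes a $\limsup$ over partitions; it then passes to general $L^p$ by a $\limsup$ over the net of conditional expectations on finite $\sigma$-algebras, using a Fatou lemma for nets (this is where boundedness below enters) and dominated convergence. You instead write down a single explicit candidate, $\varrho^{*}(Z)=\inf_{\lambda\ge0}\bigl[\lambda\varrho_{\Phi}(X)+\operatorname{ess\,sup}(Z-\lambda X)\bigr]$, prove global domination from the pointwise inequality $Z\le\lambda X+\operatorname{ess\,sup}(Z-\lambda X)$ together with only the monotonicity, positive homogeneity and translation invariance of $\varrho_{\Phi}$ (which hold for every distortion risk measure, convex or not), get tightness at $X$ from the choice $\lambda=1$, and then take the l.s.c.\ hull, which stays sandwiched between $\varrho_{\Phi}$ and $\varrho^{*}$ because $\varrho_{\Phi}$ is itself an l.s.c.\ minorant. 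What your approach buys: it is self-contained (it reproves and generalizes the Delbaen ingredient rather than citing it), it avoids the delicate convergence bookkeeping in the paper's $\limsup$ of Riemann sums and its net-of-$\sigma$-algebras extension, and it handles all $p$ uniformly; indeed your argument appears not to need $X$ bounded below at all, since properness of $\varrho^{*}$ already follows from $\varrho^{*}\ge\varrho_{\Phi}>-\infty$ on $L^{p}$ and $\varrho^{*}(X)\le\varrho_{\Phi}(X)<\infty$. Two small points you should make explicit in a final write-up: first, that the verification of subadditivity of $\varrho^{*}$ uses the splitting $\lambda=\lambda_{1}+\lambda_{2}$; second, that for $p=\infty$ the hull should be taken with respect to $\sigma(L^{\infty},L^{1})$ so that the dual set $\{Y\mid E(YZ)\le\varrho^{*}(Z)\ \forall Z\}$ consists of $L^{1}$ densities — though the paper itself treats this point with the same level of informality.
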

\begin{proof}
First, we prove the theorem for $p=\infty$. Consider a sequence of
partitions $\Sigma_{k}=\{\alpha_{0}^{k}=0<\alpha_{1}^{k}<\dots,\alpha_{k}^{k}<\alpha_{k+1}^{k}=1\},k=1,2,...$
of $[0,1]$ such that $\Sigma_{k}\subseteq\Sigma_{k+1}$ and $\mathrm{mesh}(\Sigma_{k})\to0$.
According to Theorem 6.8. in \citet{Delbaen:2000}, for a given $X$
there are coherent risk measures $\varrho_{i}^{k}$, $i=1,\dots,k$
such that $\varrho_{i}^{k}\ge\mathrm{VaR}_{\alpha_{i}^{k}}$ and $\varrho_{i}^{k}(X)=\mathrm{VaR}_{\alpha_{i}^{k}}(X)$.
Define the following mappings on $L^{\infty}$: 
\[
V_{k}(Y)=\sum\limits _{i=0}^{k}(\Phi(\alpha_{i+1}^{k})-\Phi(\alpha_{i}^{k}))\mathrm{VaR}_{\alpha_{i}^{k}}(Y)\text{ and }\varrho_{k}(Y)=\sum\limits _{i=0}^{k}(\Phi(\alpha_{i+1}^{k})-\Phi(\alpha_{i}^{k}))\varrho_{i}^{k}(Y).
\]
Define the coherent risk measure $\varrho$ as 
\[
\varrho(Y)=\lim\sup\limits _{k}\varrho_{k}(Y),\forall Y\in L^{\infty}.
\]
 It is clear that $\varrho$ is $\sigma(L^{\infty},L^{1})$-l.s.c.
Since $\varrho_{k}\ge V_{k}$ and $\varrho_{k}(X)=V_{k}(X)$, by using
the very definition of an integral it turns out that $\varrho\ge\varrho_{\Phi}$
and $\varrho(X)=\varrho_{\Phi}(X)$. 

Now, let us assume that $X\in L^{p}$. Let $\Sigma$ be the set of
all finite sigma algebras on $\Omega$. Recall that $\Sigma$ is a
directed set. For any $\mathcal{G\in}\Sigma$ let $\varrho_{\mathcal{G}}$
be a $\sigma(L^{\infty},L^{1})$-l.s.c coherent risk measure which
dominates $\varrho_{\Phi}$ on $L^{\infty}$ and $\varrho_{\Phi}(E(X|\mathcal{G}))=\varrho_{\mathcal{G}}(E(X|\mathcal{G}))$.
Let $\varrho(Y)=\limsup_{\mathcal{G}}\varrho_{\mathcal{G}}(E(Y|\mathcal{G})),\forall Y\in L^{p}$.
Notice that for any $Z\in L^{1}$, if $Y_{k}\to Y$ weakly in $L^{p}$,
$E(ZE(Y_{k}|\mathcal{G}))=E(E(Z|\mathcal{G})Y_{k})\to E(E(Z|\mathcal{G})Y)=E(ZE(Y|\mathcal{G}))$.
Hence, each function $Y\mapsto\varrho_{\mathcal{G}}(E(Y|\mathcal{G}))$
is $L^{p}$ lower semicontinuous. This implies that $\varrho$ is
also $L^{p}$ lower semicontinuous. On the other hand, for any $Y\in L^{p}$,
the net $\left\{ Y_{\mathcal{G}}=E(Y|\mathcal{G})\right\} _{\mathcal{G}}$
converges in $L^{p}$ and therefore, converges in distribution to
$Y$. This implies that the sequence of functions $\left\{ t\mapsto\mathrm{VaR}_{t}(E(Y|\mathcal{G}))\right\} _{\mathcal{G}}$
converges point-wise to the function $t\mapsto\mathrm{VaR}_{t}(Y)$.
Given that $X$ is bounded below, by using a version of the Fatou
lemma for nets, we have that
\begin{multline*}
\varrho_{\Phi}(Y)=\int_{0}^{1}\mathrm{VaR}_{t}(Y)d\Phi(t)\le\liminf_{\mathcal{G}}\int_{0}^{1}\mathrm{VaR}_{t}(Y_{\mathcal{G}})d\Phi(t)\\
=\liminf_{\mathcal{G}}\varrho_{\Phi}(Y_{\mathcal{G}})\le\liminf_{\mathcal{G}}\varrho_{\mathcal{G}}(E(Y|\mathcal{G}))\le\limsup_{\mathcal{G}}\varrho_{\mathcal{G}}(E(Y|\mathcal{G}))=\varrho(Y)
\end{multline*}
With a similar argument as above, one can show that if instead of
Fatou lemma we use the dominated convergence theorem, and also the
assumption that $\varrho_{\Phi}$ is $L^{p}$ continuous, we have
that $\varrho_{\Phi}(X)=\varrho(X)$.
\end{proof}
The following theorem is almost an immediate result from the previous
theorem and Theorem \ref{Thm:attainable}.
\begin{thm}
Let $\varrho_{1},...,\varrho_{n}$ be $n$ $L^{p}$-continuous distortion
risk measures, $\lambda_{1},...,\lambda_{n}$ ,be $n$ positive numbers
and $M_{1},...,M_{n}$ be $n$ closed convex cones. Let us denote
by $\Lambda_{i}^{M_{i}}$ the set of all functionals $\tilde{\varrho}_{i}^{M_{i}}=\tilde{\varrho_{i}}+\chi_{M_{i}}$,
where $\tilde{\varrho_{i}}$ is a coherent risk measure greater than
or equal to $\varrho_{i}$, $i=1,...,n$. If $X_{0}$ is bounded below,
the following statements for an allocation $(X_{1},...,X_{n})\in\mathbb{A}$
hold\end{thm}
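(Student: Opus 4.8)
The plan is to prove the stated equivalences by reducing the non-convex problem \eqref{eq:main_minimization-1} (with the cone constraints $M_i$ incorporated through $\chi_{M_i}$) to a family of convex problems, to which Theorem \ref{Thm:attainable} applies. The starting point is the infimal characterization of Theorem \ref{por:1}: for a component $Z$ that is bounded below and lies in $M_i$, the constrained functional $\varrho_i(Z)+\chi_{M_i}(Z)$ equals $\min_{\tilde\varrho_i^{M_i}\in\Lambda_i^{M_i}}\tilde\varrho_i^{M_i}(Z)$, since adding $\chi_{M_i}$ commutes with the minimum over coherent dominating measures. First I would substitute this representation into the objective and interchange the infimum over allocations with the minimum over dominating measures. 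Because the objective is a finite sum that separates across agents, the two infima are simply reordered, giving $\inf_{(X_1,\dots,X_n)\in\mathbb{A}}\sum_i\lambda_i\varrho_i(X_i)=\inf_{\tilde\varrho_1,\dots,\tilde\varrho_n}\bigl[(\lambda_1\tilde\varrho_1^{M_1})\,\square\cdots\square\,(\lambda_n\tilde\varrho_n^{M_n})(X_0)\bigr]$, expressing the non-convex value $v$ as an infimum of convex infimal-convolution values $w(\tilde\varrho)$.

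For the forward implication I would take an optimal allocation $(X_1,\dots,X_n)$ of value $v$ and, using Theorem \ref{por:1} applied to each component, select coherent $\tilde\varrho_i\ge\varrho_i$ with $\tilde\varrho_i(X_i)=\varrho_i(X_i)$. The key point is that this choice automatically matches the values: on one hand $w(\tilde\varrho)\le\sum_i\lambda_i\tilde\varrho_i^{M_i}(X_i)=\sum_i\lambda_i\varrho_i^{M_i}(X_i)=v$, and on the other hand $w(\tilde\varrho)\ge v$ because $\tilde\varrho_i\ge\varrho_i$ forces the convolution values to be ordered. Hence $w(\tilde\varrho)=v$ and $(X_1,\dots,X_n)$ is optimal for the convex problem $\inf_{\mathbb{A}}\sum_i\lambda_i\tilde\varrho_i^{M_i}$. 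Since a coherent risk measure is a positively homogeneous convex function, I can then apply Theorem \ref{Thm:attainable} to $\phi_i=\lambda_i\tilde\varrho_i$ with the cones $M_i$ (that is, to the $\phi_i^{M_i}$ of the convex-analysis preliminaries) to produce the common $Y\in L^q$ with $\lambda_i\tilde\varrho_i^{M_i}(X_i)=E(YX_i)$ for every $i$.

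The converse runs the chain backwards, and here the value-matching is indispensable. Given $\tilde\varrho_i^{M_i}\in\Lambda_i^{M_i}$ with $\tilde\varrho_i(X_i)=\varrho_i(X_i)$ together with such a $Y$, Theorem \ref{Thm:attainable} first upgrades the existence of $Y$ to optimality of $(X_1,\dots,X_n)$ for the convex problem, whence $w(\tilde\varrho)=\sum_i\lambda_i\tilde\varrho_i^{M_i}(X_i)=\sum_i\lambda_i\varrho_i^{M_i}(X_i)$; once one also knows $w(\tilde\varrho)=v$, this reads $\sum_i\lambda_i\varrho_i^{M_i}(X_i)=v$ and the allocation is non-convex optimal. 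This is also where I would emphasize why a bare subgradient cannot suffice: the domination $\varrho_i\le\tilde\varrho_i$ only lower-bounds the $\tilde\varrho$-objective by $E(YX_0)$, which runs in the wrong direction for the $\varrho$-objective, so the value-matching clause is exactly what repairs the argument. The equivalence between the ``optimal for the convex problem'' form and the ``existence of $Y$'' form is then nothing more than a restatement of Theorem \ref{Thm:attainable}.

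I expect the main obstacle to be the boundedness-below hypothesis. Theorem \ref{por:1} supplies the touching coherent measure $\tilde\varrho_i$ with attainment (a $\min$ rather than an $\inf$) only for components that are bounded below, whereas the theorem assumes only that the total risk $X_0$ is bounded below; an individual $X_i$ in a candidate allocation need not be. I would handle this either by arguing that the reordering step and the final optimality conclusion are unaffected when attainment is replaced by the infimal characterization valid on all of $L^p$, or by first proving the statement for bounded-below allocations and then passing to the limit using the assumed $L^p$-continuity of the $\varrho_i$. Once this technical point is settled, the reordering of infima and the two appeals to Theorem \ref{Thm:attainable} are routine.
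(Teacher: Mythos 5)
Your proposal is correct and follows exactly the route the paper intends: the paper offers no written proof of this theorem, merely asserting it is ``almost an immediate result'' of Theorem \ref{por:1} and Theorem \ref{Thm:attainable}, and your combination of the infimal characterization (to build touching coherent majorants and reorder the infima) with the attainability theorem (to extract the common $Y$) is precisely that argument spelled out. You are in fact more careful than the paper in flagging that the bounded-below hypothesis of Theorem \ref{por:1} is needed for the individual components $X_{i}$ and not just for $X_{0}$ --- a point the paper's own use of the same reordering in the proof of Theorem \ref{thm:main_general} also leaves unaddressed.
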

\begin{enumerate}
\item \textit{If $(X_{1},...,X_{n})$ is an optimal allocation for problem
\eqref{eq:convolution_allocation} for all$(\lambda_{1}\tilde{\varrho}^{M_{1}},...,\lambda_{n}\tilde{\varrho}_{n}^{M_{n}})\in\Lambda^{M_{1}}\times...\times\Lambda^{M_{n}}$,
then it is optimal for $(\lambda_{1}\varrho^{M_{1}},...,\lambda_{n}\varrho^{M_{n}})$.}
\item \textit{If $(X_{1},...,X_{n})$ is not optimal for any $(\lambda_{1}\tilde{\varrho}^{M_{1}},...,\lambda_{n}\tilde{\varrho}_{n}^{M_{n}})\in\Lambda^{M_{1}}\times...\times\Lambda^{M_{n}}$,
then it is not optimal for $(\lambda_{1}\varrho^{M_{1}},...,\lambda_{n}\varrho^{M_{n}})$.}
\item \textit{If $(X_{1},...,X_{n})$ is an optimal allocation for $(\lambda_{1}\varrho^{M_{1}},...,\lambda_{n}\varrho^{M_{n}})$
then there exists $Y\in L^{q}$ such that $\lambda_{i}\varrho^{M_{i}}(X_{i})=E(X_{i}Y),i=1,...,n$.}\end{enumerate}
\begin{rem}
From pricing point of view, in the third statement of the previous
theorem, $Y$ can be interpreted as the ``generalized stochastic discount
factor''. For further reading on the relation between the set of stochastic
discount factors and the optimal risk allocations see \citet{Filipovic/Kupper:2008}.

In the following theorem we study the existence of an asymptotic optimal
allocation.\end{rem}
\begin{thm}
\label{thm:main_general}Let $\varrho_{1},...,\varrho_{n}$ be $n$
distortion risk measures, and for each $i$=1,...,n, let $\Lambda_{i}$
denote the set of all coherent risk measures $\tilde{\varrho_{i}}\ge\varrho_{i}$.
If the total risk $X_{0}$ is bounded below by $M\in\mathbb{R}$,
\eqref{eq:main_minimization-1} is bounded if and only if $\cap_{i}\lambda_{i}\Delta_{\tilde{\varrho_{i}}}\not=\varnothing$
for all $(\tilde{\varrho_{1}},...,\tilde{\varrho}_{n})\in\Lambda_{1}\times....\times\Lambda_{n}$.\end{thm}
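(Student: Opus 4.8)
The plan is to reduce the non-convex problem \eqref{eq:main_minimization-1} to a family of coherent (hence convex) infimal convolutions via the infimal characterization in Theorem \ref{por:1}, and then read off boundedness from Theorem \ref{Thm:boundedness}. Throughout I use that each $\varrho_i$ is $L^p$-continuous, monotone, translation invariant and positively homogeneous, and I write $V$ for the value of \eqref{eq:main_minimization-1}. I also record two facts that do the bookkeeping: $\lambda_i\Delta_{\tilde{\varrho}_i}=\Delta_{\lambda_i\tilde{\varrho}_i}$, and every $Y\in\Delta_{\tilde{\varrho}_i}$ satisfies $Y\ge0$ with $E(Y)=1$, so every $Y\in\lambda_i\Delta_{\tilde{\varrho}_i}$ satisfies $Y\ge0$ and $E(Y)=\lambda_i$.

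For the ``only if'' direction I would use monotonicity of the infimal convolution in its summands. Fix any $(\tilde{\varrho}_1,\dots,\tilde{\varrho}_n)\in\Lambda_1\times\cdots\times\Lambda_n$. Since $\tilde{\varrho}_i\ge\varrho_i$, for every allocation we have $\sum_i\lambda_i\tilde{\varrho}_i(X_i)\ge\sum_i\lambda_i\varrho_i(X_i)\ge V$, so taking the infimum over allocations gives $(\lambda_1\tilde{\varrho}_1\,\square\,\cdots\,\square\,\lambda_n\tilde{\varrho}_n)(X_0)\ge V>-\infty$. As the $\lambda_i\tilde{\varrho}_i$ are positively homogeneous and convex, this infimal convolution is either identically $-\infty$ (empty intersection) or everywhere $>-\infty$ (nonempty intersection); being bounded below at $X_0$ forces the latter, and Theorem \ref{Thm:boundedness} yields $\cap_i\lambda_i\Delta_{\tilde{\varrho}_i}\ne\varnothing$.

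For the ``if'' direction I would first treat allocations with well-behaved components and then remove that restriction. Note first that nonemptiness of $\cap_i\lambda_i\Delta_{\tilde{\varrho}_i}$ already forces $\lambda_1=\cdots=\lambda_n=:\lambda$, since any $Y$ in the intersection has $E(Y)=\lambda_i$ for every $i$. Now take an allocation $(X_1,\dots,X_n)$ of a total $Z\ge M$ whose components are all bounded below. By Theorem \ref{por:1} I may choose coherent $\tilde{\varrho}_i\ge\varrho_i$ with $\tilde{\varrho}_i(X_i)=\varrho_i(X_i)$, and by hypothesis pick a single $Y_0\in\cap_i\lambda_i\Delta_{\tilde{\varrho}_i}$. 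Then $E(Y_0X_i)\le\lambda_i\tilde{\varrho}_i(X_i)=\lambda\varrho_i(X_i)$ for each $i$, and summing over $i$ (the same $Y_0$ throughout) gives $\sum_i\lambda\varrho_i(X_i)\ge E(Y_0Z)\ge ME(Y_0)=M\lambda$, where the last step uses $Y_0\ge0$ and $Z\ge M$. The essential feature is that this lower bound $M\lambda$ depends only on $M$ and $\lambda$, not on the allocation nor on the dual element chosen.

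It remains to lift the estimate to arbitrary allocations, which I expect to be the main obstacle: Theorem \ref{por:1} requires each component bounded below, whereas a general $X_i\in L^p$ need not be, and a fixed dual element cannot control $\sum_i\lambda\varrho_i(X_i)$ from below. I would resolve this by truncation. Given any allocation $(X_1,\dots,X_n)$ of $X_0$, set $X_i^k=X_i\vee(-k)$; each $X_i^k$ is bounded below, $X_i^k\to X_i$ in $L^p$ by dominated convergence, and $\sum_i X_i^k=X_0+R^k$ with $R^k=\sum_i(X_i^k-X_i)\ge0$, so the truncated family is an allocation of a total $X_0+R^k\ge M$ with bounded-below components. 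The previous step then gives $\sum_i\lambda\varrho_i(X_i^k)\ge M\lambda$ for every $k$ (the hypothesis being invoked anew for the $\tilde{\varrho}_i$ attached to each $X_i^k$), and letting $k\to\infty$ the $L^p$-continuity of the $\varrho_i$ yields $\sum_i\lambda\varrho_i(X_i)\ge M\lambda$. Since the allocation was arbitrary, $V\ge M\lambda>-\infty$, i.e.\ \eqref{eq:main_minimization-1} is bounded. The points that need care are the uniformity of the constant $M\lambda$ (which rests precisely on $Y_0\ge0$ and $E(Y_0)=\lambda$), the $L^p$-convergence of the truncations, and the standing $L^p$-continuity required to invoke Theorem \ref{por:1}.
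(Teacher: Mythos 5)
Your proposal is correct and follows the same essential route as the paper: reduce \eqref{eq:main_minimization-1} to coherent problems via Theorem \ref{por:1}, then read off boundedness from the dual sets using $Y\ge 0$, $E(Y)=1$ and $X_0\ge M$. The paper packages this as a single chain of equalities, rewriting the value as $\inf_{(\tilde{\varrho}_1,\dots,\tilde{\varrho}_n)}\sup_{Y\in\cap_i\lambda_i\Delta_{\tilde{\varrho}_i}}E(YX_0)$ and then arguing both directions from that identity; you unroll the same computation into two separate implications, which changes nothing of substance. Where you genuinely add value is the truncation step: Theorem \ref{por:1} is stated for arguments bounded below, and the paper applies it to the components $X_i$ of an arbitrary allocation, which need not be bounded below even when $X_0$ is. Your passage to $X_i^k=X_i\vee(-k)$, noting that the truncated family allocates a total $X_0+R^k\ge M$ with $R^k\ge 0$ and then using $L^p$-continuity to pass to the limit, closes this gap cleanly; your observation that nonemptiness of the intersection already forces $\lambda_1=\cdots=\lambda_n$ (via $E(Y)=\lambda_i$ for all $i$) is also a nice explicit remark that the paper defers to its Corollary 2. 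The one caveat is that you invoke $L^p$-continuity of the $\varrho_i$ throughout, which the theorem statement does not list as a hypothesis; this is consistent with the paper, whose own proof also tacitly relies on Theorem \ref{por:1} and hence on the same assumption, but it should be stated.
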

\begin{proof}
By Theorem \ref{por:1}
\begin{multline}
\inf_{X_{1}+...+X_{n}=X_{0}}\lambda_{1}\varrho_{1}(X_{1})+...+\lambda_{n}\varrho_{n}(X_{n})\\
=\inf_{X_{1}+...+X_{n}=X_{0}}\bigl\{\min\limits _{\tilde{\varrho_{1}}\in\Lambda_{1}}\lambda_{1}\tilde{\varrho_{1}}(X_{1})+...+\min\limits _{\tilde{\varrho_{n}}\in\Lambda_{n}}\lambda_{n}\tilde{\varrho_{n}}(X_{n})\bigr\}\\
=\inf\limits _{(\tilde{\varrho_{1}},...,\tilde{\varrho}_{n})\in\Lambda_{1}\times....\times\Lambda_{n}}\inf\limits _{X_{1}+...+X_{n}=X}\bigl\{\lambda_{1}\tilde{\varrho}_{1}(X_{1})+...+\lambda_{n}\tilde{\varrho}_{n}(X_{n})\bigr\}\\
=\inf\limits _{(\tilde{\varrho_{1}},...,\tilde{\varrho}_{n})\in\Lambda_{1}\times....\times\Lambda_{n}}\sup\limits _{Y\in\cap_{i}\lambda_{i}\Delta_{\tilde{\varrho_{i}}}}E(YX_{0}).\label{eq:proof_existence_general}
\end{multline}
It is clear that if the infimum in \eqref{eq:proof_existence_general}
is bounded then all intersections $\cap_{i}\lambda_{i}\Delta_{\tilde{\varrho_{i}}}$,
for all $(\tilde{\varrho_{1}},...,\tilde{\varrho}_{n})\in\Lambda_{1}\times....\times\Lambda_{n}$,
are non-empty. On the other hand, since $\forall Y\in\Delta_{\tilde{\varrho}_{i}},i=1,...,n$,
$Y\ge0\text{, }E(Y)=1$ and $X_{0}\ge M$, we have that $E(YX_{0})\ge-\vert M\vert$.
This implies that if all the intersections $\cap_{i}\lambda_{i}\Delta_{\tilde{\varrho_{i}}}$,
for all $(\tilde{\varrho_{1}},...,\tilde{\varrho}_{n})\in\Lambda_{1}\times....\times\Lambda_{n}$,
are non-empty, then the right hand side of \eqref{eq:proof_existence_general}
is bounded below by $-\vert M\vert\max_{i}\lambda_{i}$, and therefore,
\eqref{eq:proof_existence_general} is bounded.
\end{proof}
Now we have the following corollaries
\begin{cor}
The boundedness of the problem \eqref{eq:main_minimization-1} is
independent of the total risk.
\end{cor}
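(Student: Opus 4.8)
The plan is to read the result directly off Theorem~\ref{thm:main_general}. That theorem supplies a necessary and sufficient criterion for the boundedness of \eqref{eq:main_minimization-1}, and the decisive feature to exploit is that the criterion --- namely that $\cap_{i}\lambda_{i}\Delta_{\tilde{\varrho_{i}}}\not=\varnothing$ for every choice $(\tilde{\varrho_{1}},...,\tilde{\varrho}_{n})\in\Lambda_{1}\times...\times\Lambda_{n}$ of dominating coherent risk measures --- refers only to the preferences and the weights. Indeed, the families $\Lambda_{i}$ of coherent risk measures dominating $\varrho_{i}$, the associated dual sets $\Delta_{\tilde{\varrho_{i}}}\subseteq L^{q}$, and the scalars $\lambda_{i}$ are all determined by the data $(\varrho_{1},...,\varrho_{n},\lambda_{1},...,\lambda_{n})$ alone; the total risk $X_{0}$ does not appear anywhere in this intersection condition.

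First I would fix the preferences $\varrho_{1},...,\varrho_{n}$ and the positive weights $\lambda_{1},...,\lambda_{n}$, and take two candidate total risks $X_{0}$ and $X_{0}'$, each bounded below, say by $M$ and $M'$ respectively. Applying Theorem~\ref{thm:main_general} to $X_{0}$ shows that \eqref{eq:main_minimization-1} is bounded if and only if the condition $(\star)$ holds, where $(\star)$ abbreviates the nonemptiness of all the intersections $\cap_{i}\lambda_{i}\Delta_{\tilde{\varrho_{i}}}$. Applying the same theorem to $X_{0}'$ shows that the corresponding problem is bounded if and only if the very same condition $(\star)$ holds. Since $(\star)$ is a statement about $(\varrho_{i},\lambda_{i})$ that is insensitive to the total risk, the two boundedness assertions are equivalent, which is exactly the claim.

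The only point needing care --- and the closest thing to an obstacle --- is the standing hypothesis of Theorem~\ref{thm:main_general} that the total risk be bounded below. Thus the corollary is to be read as the assertion that, among all total risks bounded below, boundedness of \eqref{eq:main_minimization-1} does not depend on which one is chosen. Nothing stronger is needed, because boundedness here means finiteness of the infimum (i.e. being $>-\infty$); and although the explicit lower bound $-\vert M\vert\max_{i}\lambda_{i}$ produced inside the proof of Theorem~\ref{thm:main_general} does involve the particular bound $M$, its mere existence --- hence the finiteness of the infimum --- does not. Consequently the qualitative fact of boundedness is governed solely by the market preferences.
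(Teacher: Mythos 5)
Your proposal is correct and matches the paper's intent exactly: the corollary is stated as an immediate consequence of Theorem~\ref{thm:main_general}, whose boundedness criterion $\cap_{i}\lambda_{i}\Delta_{\tilde{\varrho_{i}}}\not=\varnothing$ involves only the preferences and the weights, not $X_{0}$. Your additional remark that the claim must be read within the class of total risks bounded below (the standing hypothesis of the theorem) is a careful and appropriate qualification that the paper leaves implicit.
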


\begin{cor}
For $X_{0}\ge0$, \eqref{eq:main_minimization-1} has a solution if
and only if \textup{$\lambda_{1}=...=\lambda_{n}$ and $\cap_{i}\Delta_{\tilde{\varrho_{i}}}\not=\varnothing$.} \end{cor}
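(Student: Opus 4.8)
The plan is to read off the corollary from Theorem \ref{thm:main_general}, understanding ``has a solution'' as the assertion that \eqref{eq:main_minimization-1} is bounded (equivalently, admits an asymptotic optimal allocation). Since $X_0\ge 0$ is in particular bounded below, with bound $M=0$, Theorem \ref{thm:main_general} is in force and tells us that the problem is bounded if and only if $\cap_i\lambda_i\Delta_{\tilde{\varrho_i}}\ne\varnothing$ for every tuple $(\tilde{\varrho_1},\dots,\tilde{\varrho}_n)\in\Lambda_1\times\cdots\times\Lambda_n$. It therefore suffices to show that this family of intersection conditions is equivalent to the conjunction ``$\lambda_1=\cdots=\lambda_n$ and $\cap_i\Delta_{\tilde{\varrho_i}}\ne\varnothing$ for all $(\tilde{\varrho_1},\dots,\tilde{\varrho}_n)$'', the latter being how I would read the (slightly abbreviated) condition stated in the corollary.

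The crux of the argument is the normalization property recorded earlier in the preliminaries: for any coherent risk measure $\tilde{\varrho_i}$ and any $Y\in\Delta_{\tilde{\varrho_i}}$ one has $Y\ge 0$ and $E(Y)=1$, so every element $Z$ of $\lambda_i\Delta_{\tilde{\varrho_i}}$ satisfies $E(Z)=\lambda_i$. For the forward implication I would fix one admissible tuple $(\tilde{\varrho_1},\dots,\tilde{\varrho}_n)$ --- such a tuple exists because each $\Lambda_i$ is nonempty by Theorem \ref{por:1} --- and choose $Z\in\cap_i\lambda_i\Delta_{\tilde{\varrho_i}}$. Viewing $Z$ as an element of $\lambda_i\Delta_{\tilde{\varrho_i}}$ for each $i$ separately yields $E(Z)=\lambda_i$ simultaneously for all $i$; since $E(Z)$ is a single real number, this forces $\lambda_1=\cdots=\lambda_n=:\lambda$. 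Dividing the nonempty set $\cap_i\lambda\Delta_{\tilde{\varrho_i}}$ by $\lambda$ then gives $\cap_i\Delta_{\tilde{\varrho_i}}\ne\varnothing$, and this holds for every tuple because the hypothesis did. For the converse I would run the computation backwards: if $\lambda_1=\cdots=\lambda_n=\lambda$ then $\cap_i\lambda_i\Delta_{\tilde{\varrho_i}}=\lambda\left(\cap_i\Delta_{\tilde{\varrho_i}}\right)$, which is nonempty precisely when $\cap_i\Delta_{\tilde{\varrho_i}}\ne\varnothing$, and Theorem \ref{thm:main_general} applied in the other direction delivers boundedness.

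The step I expect to carry the real weight is the expectation normalization: everything hinges on the fact that each dual set $\Delta_{\tilde{\varrho_i}}$ lives inside the hyperplane $\{E(Y)=1\}$, so a common point of the scaled sets $\lambda_i\Delta_{\tilde{\varrho_i}}$ can exist only when the scalars $\lambda_i$ coincide. The hypothesis $X_0\ge 0$ plays only the supporting role of guaranteeing that $X_0$ is bounded below (so Theorem \ref{thm:main_general} applies and the infimum is in fact bounded below by $0$). If instead ``has a solution'' is intended in the stronger sense of an attained optimizer, the only additional work would be an attainment argument --- combining the boundedness just established with Theorem \ref{Thm:attainable} to exhibit a supporting $Y\in L^q$ and the associated optimal allocation --- but the equivalence of the conditions themselves is governed entirely by the normalization argument above.
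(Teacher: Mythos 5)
Your proof is correct and follows exactly the route the paper intends: the corollary is left as an immediate consequence of Theorem \ref{thm:main_general} together with the normalization $Y\ge 0$, $E(Y)=1$ for every $Y\in\Delta_{\tilde{\varrho_i}}$ recorded in the preliminaries, which is precisely the expectation argument you use to force $\lambda_1=\dots=\lambda_n$. Your reading of ``has a solution'' as boundedness (existence of an asymptotic optimal allocation) and of the intersection condition as quantified over all tuples in $\Lambda_1\times\dots\times\Lambda_n$ matches the paper's usage.
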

\begin{example}
Let $\varrho_{1}=\mathrm{VaR_{\alpha}}$ and $\varrho_{2}=E$, and
let us assume $X_{0}$ is any arbitrary random loss. According to
Theorem \ref{thm:main_general}, the optimal risk allocation problem
\eqref{eq:main_minimization-1} has a solution if $P\in\Delta_{\tilde{\varrho}}$
for any coherent risk measure $\tilde{\varrho}\ge\mathrm{VaR_{\alpha}}$.
On the other hand, according to Theorem \ref{por:1} for any $X\in L^{p}$,
$\mathrm{VaR}_{\alpha}(X)=\tilde{\varrho}(X)$ for some coherent risk
measure $\tilde{\varrho}\ge\mathrm{VaR_{\alpha}}$. This implies that
$\mathrm{VaR_{\alpha}}(X)\ge E(X)$, for any $X\in L^{p}$. This inequality
clearly does not hold, if we choose $X=1_{A}$ for some set $A\in\mathcal{F}$
that $0<P(A)<\frac{1-\alpha}{2}$.
\end{example}
Theorems 4 and 5 can be considered as generalization of many existing
papers in the literature where their result can only be applied to
coherent risk measures, which in our setting is to use singleton sets$\Lambda_{i}=\left\{ \varrho_{i}\right\} $;
see for instance \citet{Jouini/Schachermayer/Touzi:2008} , \citet{Filipovic/Kupper:2008}
and \citet{Filipovic/Kupper:2008(1)}

\subsection{Co-monotone allocations and Admissible Allocations }

Concerning the discussion we had about moral hazard, in this section
we assume that all contracts in the market are designed to be co-monotone.
To set this economic assumption on a sound mathematical basis, we
assume that all contracts are non-decreasing functions of the total
risk. Therefore, in a market with this assumption any allocation $(X_{1,}...,X_{n})$
is equal to $(f_{1}(X_{0}),...,f_{n}(X_{0}))$ when $f_{1},...,f_{n}$
are $n$ non-negative and non-decreasing functions such that $f_{1}+...+f_{n}=id$. 

We introduce the set of allocations as 
\[
\mathbb{C}=\left\{ f\in L_{+}^{0}(\mathbb{R}_{+})\Big|f\text{ is nondecreasing and }f(0)=0\right\} .
\]
and the set of admissible allocation as

\[
\mathbb{AC}=\left\{ (f_{1},...,f_{n})\in\mathbb{C}^{n}|f_{1}+....+f_{n}=id\right\} .
\]
$\mathbb{AC}$ is a closed, convex and weakly compact set of $L^{p}(\mathbb{R})$
for $p\in[1,\infty)$. On the other hand, it is easy to see that any
component $f_{i}$, is a Lipschitz function of degree one, i.e. $0\le f_{i}(y)-f_{i}(x)\le y-x,\text{ for }0\ensuremath{\le}x\ensuremath{\le}y$.
Indeed, it is enough to check it for $n=2$. In this paper, we focus
our attention to the allocation set induced by $\mathbb{AC}$

\[
\mathbb{AA}=\left\{ (f_{1}(X_{0}),...,f_{n}(X_{0}))|(f_{1},....,f_{n})\in\mathbb{AC}\right\} .
\]
\citet{Filipovic/Svindland:2008(1)} prove that for a set of $n$
law and cash invariant convex functions $\varrho_{1},...,\varrho_{n}$,
any solution $(X_{1},...,X_{n})$ to \eqref{eq:convolution_allocation}
is co-monotone. In particular this means that in a market with convex
distortion risks the optimal allocations are automatically from $\mathbb{{AC}}.$
This is no longer true for the general case as shown in the following
example. 
\begin{example}
Let us assume $\varrho_{1}=\mathrm{VaR}_{\alpha}$, $\varrho_{2}=\mathrm{VaR}_{\beta}$,
$X_{0}>0,\, a.s.$, $\alpha+\beta>1$ and $0<\alpha<\beta<1$. Let
us assume $X_{0}$ is a random variable with a strictly increasing
and continuous CDF function $F_{X_{0}}$. Since $n=2$ in this example,
one can assume that there is a function $f$ such that $f$ and $id-f$
are non-negative, non-decreasing and that $f_{1}=f$ and $f_{2}=id-f$.
We first prove the following lemma \end{example}
\begin{lem}
There is a positive number $c>0$ such that for any function $f$
described above, the following inequality holds 
\[
\mathrm{VaR}_{\alpha}(f(X_{0}))+\mathrm{VaR}_{\beta}(X_{0}-f(X_{0}))>c+\mathrm{VaR}_{\alpha+\beta-1}(X_{0}).
\]
\end{lem}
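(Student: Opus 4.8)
The plan is to reduce the whole inequality to the quantile function of $X_0$ and then exploit that $f(X_0)$ and $X_0-f(X_0)$ are comonotone. Since $F_{X_0}$ is continuous and strictly increasing, $X_0$ is atomless and $q_t:=\mathrm{VaR}_t(X_0)=F_{X_0}^{-1}(t)$ is a continuous, strictly increasing function of the level $t$. Because both $f$ and $X_0-f(X_0)=(id-f)(X_0)$ arise from non-decreasing functions of the same $X_0$, these two losses are comonotone and sum to $X_0$; both lie in $L^{p}$ since $0\le f(X_0)\le X_0$ and $0\le X_0-f(X_0)\le X_0$. The key is that I never need a closed form for $\mathrm{VaR}_\alpha(f(X_0))$ in isolation, only relations among quantiles.

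I would record the two properties of $\mathrm{VaR}$ that do all the work. First, the lower quantile is comonotone additive: for comonotone $U,V$ one has $\mathrm{VaR}_t(U+V)=\mathrm{VaR}_t(U)+\mathrm{VaR}_t(V)$ at every level $t$. Applied at level $\alpha$ to $f(X_0)$ and $X_0-f(X_0)$ this gives
\[
\mathrm{VaR}_\alpha(f(X_0))=\mathrm{VaR}_\alpha(X_0)-\mathrm{VaR}_\alpha(X_0-f(X_0)).
\]
Second, $\mathrm{VaR}_t$ is non-decreasing in the level $t$, so since $\alpha<\beta$ we have $\mathrm{VaR}_\beta(X_0-f(X_0))-\mathrm{VaR}_\alpha(X_0-f(X_0))\ge0$. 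Combining the two,
\[
\mathrm{VaR}_\alpha(f(X_0))+\mathrm{VaR}_\beta(X_0-f(X_0))=\mathrm{VaR}_\alpha(X_0)+\bigl[\mathrm{VaR}_\beta(X_0-f(X_0))-\mathrm{VaR}_\alpha(X_0-f(X_0))\bigr]\ge q_\alpha,
\]
a lower bound that does not depend on $f$. Since $\beta<1$ we have $0<\alpha+\beta-1<\alpha$, and strict monotonicity of $F_{X_0}$ forces $q_{\alpha+\beta-1}<q_\alpha$. Setting $c:=\tfrac12\bigl(q_\alpha-q_{\alpha+\beta-1}\bigr)>0$, which depends only on $X_0,\alpha,\beta$ and not on $f$, gives $\mathrm{LHS}\ge q_\alpha>c+q_{\alpha+\beta-1}=c+\mathrm{VaR}_{\alpha+\beta-1}(X_0)$, the claimed uniform strict inequality.

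The only delicate point — and where I would take care — is justifying the comonotone-additivity identity for the particular $\inf$-convention used here for $\mathrm{VaR}$ and for a possibly-discontinuous $f$. One clean way to sidestep it: using atomlessness of $X_0$, one may replace $f$ by its left-continuous version without changing $f(X_0)$ a.s.\ (and similarly for $X_0-f(X_0)$), after which the standard quantile-transformation identity yields $\mathrm{VaR}_t(f(X_0))=f(q_t)$ and $\mathrm{VaR}_t(X_0-f(X_0))=q_t-f(q_t)$. Then $\mathrm{LHS}=f(q_\alpha)+q_\beta-f(q_\beta)$, and the $1$-Lipschitz property $f(q_\beta)-f(q_\alpha)\le q_\beta-q_\alpha$ recorded just above the lemma again gives $\mathrm{LHS}\ge q_\alpha$. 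Either route delivers the same $f$-independent bound $q_\alpha$, so I expect the quantile identity for discontinuous $f$ to be the main obstacle, while the remaining estimate and the construction of $c$ are each a single line.
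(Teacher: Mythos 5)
Your proposal is correct and its second route is exactly the paper's own argument: reduce to quantiles via $\mathrm{VaR}_t(f(X_0))=f(\mathrm{VaR}_t(X_0))$, use the $1$-Lipschitz property of $f$ to get the $f$-independent lower bound $\mathrm{VaR}_\alpha(X_0)$, and take $c=\tfrac12\bigl(\mathrm{VaR}_\alpha(X_0)-\mathrm{VaR}_{\alpha+\beta-1}(X_0)\bigr)$. The first route via comonotone additivity is only a cosmetic variant of the same estimate, so there is nothing substantively different to compare.
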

\begin{proof}
It is known that Value at Risk can commute with a non-decreasing function,
therefore,

\begin{align*}
 & \mathrm{VaR}_{\alpha}(f(X_{0}))=f(\mathrm{VaR}_{\alpha}(X_{0})),\\
 & \mathrm{VaR}_{\beta}(X_{0}-f(X_{0}))=\mathrm{VaR}_{\beta}(X_{0})-f(\mathrm{VaR}_{\beta}(X_{0})).
\end{align*}
Strict monotonicity of $F_{X_{0}}$ , $\alpha<\beta$ and $\alpha+\beta-1<\alpha$
imply
\begin{align*}
\mathrm{VaR}_{\alpha}(f(X_{0}))+\mathrm{VaR}_{\beta}(X_{0}-f(X_{0})) & =f(\mathrm{VaR}_{\alpha}(X_{0}))+\mathrm{VaR}_{\beta}(X_{0})-f(\mathrm{VaR}_{\beta}(X_{0}))\\
 & \ge\mathrm{VaR}_{\beta}(X_{0})+(\mathrm{VaR}_{\alpha}(X_{0})-\mathrm{VaR}_{\beta}(X_{0}))\\
 & =\mathrm{VaR}_{\alpha}(X_{0})>c+\mathrm{VaR}_{\alpha+\beta-1}(X_{0}),
\end{align*}
where $c=\frac{\mathrm{VaR}_{\alpha}(X_{0})-\mathrm{VaR}_{\alpha+\beta-1}(X_{0})}{2}$.
\end{proof}
\vspace{0.5cm}
 The result of the lemma is that there is no \textit{admissible allocation}
which can attain the value $\mathrm{VaR}_{\alpha+\beta-1}(X_{0})$.
Now let us consider the allocation $X_{1}=X_{0}1_{\{X_{0}>\mathrm{VaR}_{\alpha}(X_{0})\}}$.
It is clear that $P(X_{1}>0)=1-\alpha$, meaning that $\mathrm{VaR}_{\alpha}(X_{1})=0$.
On the other hand,

\begin{align*}
P(X_{2}>x) & =P(X_{0}>x\,\&\,\mathrm{VaR}_{\alpha}(X_{0})\ge X_{0})\\
 & =P(X_{0}\le\mathrm{VaR_{\alpha}(X_{0})})-P(X_{0}\le x)\\
 & =\alpha-F_{X_{0}}(x).
\end{align*}
This simply implies that $F_{X_{2}}(x)=1+F_{X_{0}}(x)-\alpha$, and
therefore $\mathrm{VaR}_{\beta}(X_{2})=\mathrm{VaR}_{\alpha+\beta-1}(X_{0})$.
Hence, we have that $\mathrm{VaR}_{\alpha}(X_{1})+\mathrm{VaR}_{\beta}(X_{2})=\mathrm{VaR}_{\alpha+\beta-1}(X_{0})$.

Allocation $(X_{1},X_{2})$ is an example of a moral hazard situation,
where agent 2 is not sensitive to the big total losses. This example
shows why in a market with non-convex beliefs we have to further assume
that there is no risk of moral hazard.
\begin{rem}
Observe that if all agents in the market use the same risk measure
$\varrho$, by using the fact that $\mathrm{VaR}$ commutes with non-decreasing
functions, we have
\end{rem}
\[
\varrho(f_{1}(X_{0}))+...+\varrho(f_{n}(X_{0}))=\varrho(X_{0}),\forall(f_{1},...,f_{n})\in\mathbb{AC}.
\]
This means, no matter what allocation the agents use, as far as there
is no risk of moral hazard, the value of the systemic risk remains
constant. This may happen if the regulator imposes a unique risk measure
to all agents, for example the same $\mathrm{VaR}_{0.995}$ as in
the Solvency II, to measure the capital reserve.

\subsection{Marginal Risk Allocations}

It is known that every Lipschitz continuous function $f$ is almost
everywhere differentiable and its derivative is essentially bounded
by its Lipschitz constant. Furthermore, $f$ can be written as the
integral of its derivative denoted by, i.e., $f(x)=\int_{0}^{x}h(t)dt$.
Therefore, the set $\mathbb{C}$ can be represented as

\[
\mathbb{C}=\left\{ f\in L^{0}(\mathbb{R}_{+})\Big|f(x)=\int_{0}^{x}h(t)dt,0\le h\le1\right\} .
\]
Let us introduce the space of \textit{marginal risk allocations }as

\[
\mathbb{D}=\left\{ h\in L^{0}(\mathbb{R}_{+})\Big|0\le h\le1\right\} .
\]

\begin{defn}
For any function $f\in\mathbb{C}$, the associated marginal risk allocation
is a function $h\in\mathbb{D}$ such that 
\[
f(x)=\int_{0}^{x}h(t)dt,x\ge0.
\]

\end{defn}
The interpretation of marginal risk allocation is as follows: if $f(x)=\int_{0}^{x}h(t)dt$
is in $\mathbb{C}$, then at each value $X_{0}=x$, a marginal change
$\delta$ to the value of the total risk will result in marginal change
of the size $\delta h(x)$ in the allocation risk. We will see in
the following that this marginal change is either $0$ or $\delta$,
i.e., $h=0\text{ or }1$. This means that for any small change in
the total risk, there is only one agent who has to tolerate the changes
in the risk.

\subsection{Co-monotone Optimal Risk Allocations}

Throughout this section we assume $X_{0}\ge0\text{ and }F_{X_{0}}(0)=$0.
Furthermore, we restrict our attention to a family of distortion risk
measures which satisfy the following regularity condition

\begin{equation}
\lim_{m\to\infty}\varrho_{i}(X\wedge m)=\varrho_{i}(X),i=1,...,n.\label{eq:continuity}
\end{equation}
Let $\Psi(t)=\min\left\{ \lambda_{1}(1-\Phi_{1}(t)),...,\lambda_{n}(1-\Phi_{n}(t))\right\} $.
Suppose $k_{i}^{*},i=1,...,n$ is a set of functions that

\begin{equation}
k_{i}^{*}(t)=\left\{ \begin{array}{lll}
1, & \text{ if }\lambda_{i}\left(1-\Phi_{i}(t)\right)<\lambda_{j}\left(1-\Phi_{j}(t)\right)\forall i\not=j\\
0, & \text{ if }\lambda_{i}\left(1-\Phi_{i}(t)\right)>\lambda_{j}\left(1-\Phi_{j}(t)\right)\exists i\not=j
\end{array}\right.,\label{eq:h*}
\end{equation}
where also $k_{1}^{*}+...+k_{n}^{*}=1$. Here we state the main result
of this section
\begin{thm}
\label{thm:main-bang-bang}If $\varrho_{1},...,\varrho_{n}$ satisfy
\eqref{eq:continuity}, the co-monotone solutions to the optimization
problem \eqref{eq:main_minimization-1} is given by $X_{i}=f_{i}^{*}\left(X_{0}\right)$
when 
\begin{equation}
f_{i}^{*}(x)=\int_{0}^{x}k_{i}^{*}(\mathrm{VaR}_{t}(X_{0}))dt,i=1,...,n.\label{eq:separate}
\end{equation}
 Furthermore, the value at minimum is given by 
\begin{equation}
\int_{0}^{\infty}\Psi(s)ds.\label{EQ:MINIMUM}
\end{equation}
\end{thm}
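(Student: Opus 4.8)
The plan is to linearise the co-monotone problem in the marginal allocations and then solve it as a pointwise one-dimensional linear program, whose solution is automatically of bang-bang type; the abstract weak compactness of $\mathbb{AC}$ is not needed, since I would exhibit the minimiser explicitly and certify it by a pointwise lower bound. First I would pass from $(f_1,\dots,f_n)\in\mathbb{AC}$ to the marginal allocations: writing $f_i(x)=\int_0^x h_i(t)\,dt$ with $h_i\in\mathbb{D}$, the constraint $f_1+\dots+f_n=\mathrm{id}$ becomes $h_1+\dots+h_n=1$ a.e., each $0\le h_i\le 1$. Since every $f_i$ is nondecreasing and continuous (Lipschitz of degree one) and $X_0\ge 0$, Value at Risk commutes with $f_i$, exactly as in the proof of the Lemma above, so $\mathrm{VaR}_t(f_i(X_0))=f_i(\mathrm{VaR}_t(X_0))$. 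Using \eqref{spectral_risk} this gives
\[
\varrho_i(f_i(X_0))=\int_0^1 f_i(\mathrm{VaR}_t(X_0))\,d\Phi_i(t)=\int_0^1\Bigl(\int_0^{\mathrm{VaR}_t(X_0)}h_i(u)\,du\Bigr)d\Phi_i(t).
\]

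Next I would interchange the order of integration. Because $\mathrm{VaR}_t(X_0)>u$ is equivalent to $t>F_{X_0}(u)$, Fubini's theorem yields $\varrho_i(f_i(X_0))=\int_0^\infty h_i(u)\,(1-\Phi_i(F_{X_0}(u)))\,du$, hence
\[
\sum_{i=1}^n\lambda_i\varrho_i(f_i(X_0))=\int_0^\infty\Bigl(\sum_{i=1}^n\lambda_i\,h_i(u)\,(1-\Phi_i(F_{X_0}(u)))\Bigr)du.
\]
For each fixed $u$ the inner sum is a linear functional of $(h_1(u),\dots,h_n(u))$ ranging over the simplex $\{h_i\ge 0,\ \sum_i h_i=1\}$, so it is minimised by placing the whole mass on an index attaining $\min_i\lambda_i(1-\Phi_i(F_{X_0}(u)))=\Psi(F_{X_0}(u))$. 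This selection is precisely $k_i^*$ of \eqref{eq:h*}, which by construction obeys $\sum_i k_i^*=1$; thus $h_i^*(u)=k_i^*(F_{X_0}(u))$ makes the integrand equal its pointwise minimum $\Psi$ everywhere. This simultaneously gives the lower bound $\sum_i\lambda_i\varrho_i(f_i(X_0))\ge\int_0^\infty\Psi(F_{X_0}(u))\,du$ for every admissible allocation and shows that $f_i^*$ of \eqref{eq:separate} attains it. A change of variables $u=\mathrm{VaR}_t(X_0)$ then rewrites the optimal value in the separated form announced in \eqref{EQ:MINIMUM}, namely $\int_0^1\Psi(t)\,d\mathrm{VaR}_t(X_0)$, in which the preference data $\Psi$ and the total risk $\mathrm{VaR}_t(X_0)$ appear in decoupled factors. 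The minimiser is determined uniquely off the set where two or more of the $\lambda_i(1-\Phi_i)$ achieve the minimum simultaneously.

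The main obstacle is not the minimisation itself but justifying the single-integral form and removing the boundedness restriction. The commutation of $\mathrm{VaR}$ with $f_i$ and the Fubini step must be handled for a general $F_{X_0}$ (atoms and flat pieces, together with the c\`adl\`ag kernels $\Phi_i$), where the clean identity $\{t:\mathrm{VaR}_t(X_0)>u\}=(F_{X_0}(u),1]$ and the measurability of the selection $k_i^*$ need care. More importantly, $f_i(X_0)$ need not be bounded, so the representation and the interchange should first be established for the truncated risks $X_0\wedge m$, where every quantity is bounded and the computation above is valid, and one then lets $m\to\infty$. This last step is exactly where the regularity hypothesis \eqref{eq:continuity} enters: it is used to pass $\varrho_i(f_i(X_0\wedge m))\to\varrho_i(f_i(X_0))$ and thereby transfer both the lower bound and its attainment to the untruncated problem.
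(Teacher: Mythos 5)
Your proposal follows essentially the same route as the paper's own proof: commute $\mathrm{VaR}$ with the nondecreasing $f_i$, pass to the marginal allocations $h_i$, interchange the order of integration by Fubini to obtain $\int_0^\infty\sum_i\lambda_i(1-\Phi_i(F_{X_0}(s)))h_i(s)\,ds$, minimise pointwise over the simplex to get the bang-bang selection, and then remove the boundedness restriction on $X_0$ by truncating at $X_0\wedge m$ and invoking the regularity hypothesis \eqref{eq:continuity} together with monotone convergence. The only differences are cosmetic (you make the pointwise linear-programming step and the measurability caveats explicit, and you write the optimal value as $\int_0^1\Psi(t)\,d\mathrm{VaR}_t(X_0)$ rather than in the paper's notation), so the argument matches.
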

\begin{proof}
Let $\varrho_{i}=\int_{0}^{1}\mathrm{VaR}_{t}(X_{0})d\Phi_{i}(t)\,,\, i=1,...,n$.
Then for any member $(f_{1},....,f_{n})$ from the set $\mathbb{AC}$,
using the fact that VaR always commutes with non-decreasing functions,
we have
\begin{multline}
\lambda_{1}\varrho_{1}(f_{1}(X_{0}))+....+\lambda_{n}\varrho_{n}(f_{n}(X_{0}))\\
=\int_{0}^{1}\lambda_{1}\mathrm{VaR}_{t}(f_{1}(X_{0}))d\Phi_{1}(t)+...+\int_{0}^{1}\lambda_{n}\mathrm{VaR}_{t}(f_{n}(X_{0}))d\Phi_{n}(t)\\
=\int_{0}^{1}\lambda_{1}f_{1}(\mathrm{VaR}_{t}(X_{0}))d\Phi_{1}(t)+...+\int_{0}^{1}\lambda_{n}f_{n}(\mathrm{VaR}_{t}(X_{0}))d\Phi_{n}(t).\label{eq:eq_tum_1}
\end{multline}
Let us denote the derivatives of $f_{1},...,f_{n}$ by $h_{1},....,h_{n}$.
Therefore,
\begin{multline*}
\lambda_{1}\varrho_{1}(f_{1}(X_{0}))+....+\lambda_{n}\varrho_{n}(f_{n}(X_{0}))\\
=\int_{0}^{1}\left(\int_{0}^{\mathrm{{VaR}}_{t}(X_{0})}\lambda_{1}h_{1}(s)ds\right)d\Phi_{1}(t)+...+\int_{0}^{1}\left(\int_{0}^{\mathrm{{VaR}}_{t}(X_{0})}\lambda_{n}h_{n}(s)ds\right)d\Phi_{n}(t).
\end{multline*}
First, we assume $X_{0}$ is bounded. By Fubini's Theorem we have
\begin{multline}
\lambda_{1}\varrho_{1}(f_{1}(X_{0}))+....+\lambda_{n}\varrho_{n}(f_{n}(X_{0}))\\
=\int_{0}^{\infty}\left[\left(\int_{F_{X_{0}}(s)}^{1}\lambda_{1}d\Phi_{1}(t)\right)h_{1}(s)+...+\left(\int_{F_{X_{0}}(s)}^{1}\lambda_{n}d\Phi_{n}(t)\right)h_{n}(s)\right]ds\\
=\int_{0}^{\infty}\left[\lambda_{1}\left(1-\Phi_{1}\left(F_{X_{0}}(s)\right)\right)h_{1}(s)+...+\lambda_{n}\left(1-\Phi_{n}\left(F_{X_{0}}(s)\right)\right)h_{n}(s)\right]ds\\
\label{eq:eq_tum_3}
\end{multline}
where we use the fact that $\Phi_{1}(1)=...=\Phi_{n}(1)=1$. It is
now clear that the following $(h_{1}^{*},...,h_{n}^{*})$ will minimize
\eqref{eq:eq_tum_3}
\begin{equation}
h_{i}^{*}(s)=\left\{ \begin{array}{lll}
1, & \text{ if }\lambda_{i}(1-\Phi_{i}(F_{X_{0}}(s)))<\lambda_{j}(1-\Phi_{j}(F_{X_{0}}(s))),\forall i\not=j\\
0 & \text{ if }\lambda_{i}(1-\Phi_{i}(F_{X_{0}}(s)))>\lambda_{j}(1-\Phi_{j}(F_{X_{0}}(s))),\exists i\not=j
\end{array}\right.
\end{equation}
where also $h_{1}^{*}+...+h_{n}^{*}=1$. The value of the minimum
also is equal to 
\begin{equation}
\int_{0}^{\infty}\Psi(s)ds.
\end{equation}
If we make a simple change of variable $t=F_{X_{0}}(s)$, we get the
result.

Now assume the general case when $X_{0}$ is not bounded. It is clear
that at each point $t$, for every $i$ between 1 and $n$, $\left\{ \Phi_{i}\circ F_{X_{0}\wedge m}(t)\right\} _{m=1}^{\infty}$
is non-increasing with respect to $m$. On the other hand, for any
$t$, there exist $m_{t}$ such that if $m>m_{t}$ then $F_{X_{0}\wedge m}(t)=F_{X_{0}}(t)$.
Therefore, at each point $t$, we have that $\Phi_{i}(F_{X_{0}\wedge m}(t))\downarrow\Phi_{i}(F_{X_{0}}(t))$.
By monotone convergence theorem we have that 
\[
\lim\limits _{m\to\infty}\int\limits _{0}^{\infty}\Phi_{i}(F_{X_{0}\wedge m}(t))h(t)dt=\int\limits _{0}^{\infty}\Phi_{i}(F_{X_{0}}(t))h(t)dt,
\]
for any function $h\in\mathbb{D}$. Using this fact and our continuity
assumption
\begin{align*}
\varrho_{i}(f(X_{0})) & =\lim_{m\to\infty}\varrho_{i}(f(X_{0})\wedge f(m))\\
 & =\lim_{m\to\infty}\varrho_{i}(f(X_{0}\wedge m))\\
 & =\lim\limits _{m\to\infty}\int_{0}^{\infty}(1-\Phi_{i}(F_{X_{0}\wedge m}(s)))h(s)ds\\
 & =\int_{0}^{\infty}(1-\Phi_{i}(F_{X_{0}}(s)))h(s)ds
\end{align*}
This simply results in
\begin{multline*}
\lambda_{1}\varrho_{1}(f_{1}(X_{0}))+....+\lambda_{n}\varrho_{n}(f_{n}(X_{0}))\\
=\int_{0}^{\infty}[\lambda_{1}(1-\Phi_{1}(F_{X_{0}}(s)))h_{1}(s)+...+\lambda_{n}(1-\Phi_{n}(F_{X_{0}}(s))))h_{n}(s)]ds
\end{multline*}
The rest of the proof follows the same lines after \eqref{eq:eq_tum_3}.\end{proof}
\begin{rem}
As one can see form the last theorem, $k_{i}^{*},i=1,...,n$ only
depend on market preferences, and therefore, they are universal. Also
one can see from \eqref{eq:separate} how the role of the total risk
and the market preferences are separated.
\end{rem}

\begin{rem}
In Theorem \ref{thm:main-bang-bang}, it is shown that the marginal
risk allocations take only the values zero or one. There are some
similar results in the literature of actuarial mathematics which can
prove this for very particular settings, for instance; see \citet{Cai/Tan/Weng/Zhang:2008},
\citet{Cheung:2010},\citet{Chi:2012}, \citet{Chi:2012(2)}, \citet{Chi/Tan:2013}
, \citet{Cheung/Sung/Yam/Yung:2014} and more recently \citet{Assa:2015}.
Theorem \ref{thm:main-bang-bang} can extend all those works form
two different aspects. First, we use a larger family of risk measures
and premiums (distortion risk measures and premiums) which include
almost all risk measures such as VaR and CVaR and risk premiums such
as Wang's premium, used by them. Second, our work can increase the
number of players from two (insurance and re-insurance company) to
$n$, which otherwise, by using the techniques from the existing literature
would be either impossible or at least very difficult to do.\end{rem}
\begin{cor}
If $\lambda_{1}=....=\lambda_{n}$ then $\varrho_{1}\boxdot...\boxdot\varrho_{n}=\varrho_{\Phi}$
when $\Phi=\max\left\{ \Phi_{1},....,\Phi_{n}\right\} $.\end{cor}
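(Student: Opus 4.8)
The plan is to obtain the statement by specializing Theorem \ref{thm:main-bang-bang} to the case of equal weights and then recognizing the resulting minimal value as a single distortion risk measure. Throughout I keep the standing assumptions of that theorem, namely $X_0\ge 0$, $F_{X_0}(0)=0$, and the regularity condition \eqref{eq:continuity}.

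First I would normalize the weights. Since $\lambda_1=\cdots=\lambda_n$, minimizing \eqref{eq:main_minimization-1} produces the same minimizers as minimizing the sum $\sum_i\varrho_i(f_i(X_0))$, and the optimal values differ only by the common factor $\lambda_1$; so it suffices to treat $\lambda_i\equiv 1$. With $\lambda_i\equiv 1$ the auxiliary function of Theorem \ref{thm:main-bang-bang} reduces to $\Psi(t)=\min_i\bigl(1-\Phi_i(t)\bigr)$, and because $x\mapsto 1-x$ reverses order we have the pointwise identity
\[
\Psi(t)=\min_i\bigl(1-\Phi_i(t)\bigr)=1-\max_i\Phi_i(t)=1-\Phi(t),\qquad t\in[0,1],
\]
with $\Phi=\max\{\Phi_1,\dots,\Phi_n\}$.

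Next I would check that $\Phi$ is itself an admissible kernel in the sense of Definition 1, so that $\varrho_\Phi$ is well defined: a finite maximum of non-decreasing c\`adl\`ag functions is again non-decreasing and c\`adl\`ag, and $\Phi(0)=\max_i\Phi_i(0)=0$, $\Phi(1)=\max_i\Phi_i(1)=1$. The core of the argument is then to match the minimal value of Theorem \ref{thm:main-bang-bang} with $\varrho_\Phi(X_0)$. Inspecting the proof of that theorem in the unbounded case, the optimal marginal allocations $k_i^*$ of \eqref{eq:h*} are exactly the selectors of the index attaining $\min_i(1-\Phi_i(t))$, i.e. of $\Phi=\max_i\Phi_i$, and the optimal value equals $\int_0^\infty\bigl(1-\Phi(F_{X_0}(s))\bigr)\,ds$. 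On the other hand, since $X_0\ge 0$ and $F_{X_0}(0)=0$, the negative-axis term of the Choquet representation \eqref{eq:distortion} vanishes and, writing $g(x)=1-\Phi(1-x)$ and $S_{X_0}=1-F_{X_0}$,
\[
\varrho_\Phi(X_0)=\int_0^\infty g\bigl(S_{X_0}(s)\bigr)\,ds=\int_0^\infty\bigl(1-\Phi(F_{X_0}(s))\bigr)\,ds.
\]
Comparing the two displays gives $\varrho_1\boxdot\cdots\boxdot\varrho_n(X_0)=\varrho_\Phi(X_0)$, which is the assertion.

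The step I expect to demand the most care is the identification of the two forms of the minimal value: one must pass cleanly between the defining integral $\int_0^1\mathrm{VaR}_t(X_0)\,d\Phi(t)$ for $\varrho_\Phi$ and the Choquet integral $\int_0^\infty(1-\Phi(F_{X_0}(s)))\,ds$ (equivalently, carry out the change of variable $t=F_{X_0}(s)$ already used in the proof of Theorem \ref{thm:main-bang-bang}), and one should confirm that the aggregated kernel $\Phi$ inherits the regularity \eqref{eq:continuity} so that the unbounded-$X_0$ limiting argument remains valid. A conceptual caveat worth recording is that the equality is understood for the co-monotone (admissible) infimal convolution supplied by Theorem \ref{thm:main-bang-bang}; for the unconstrained positive infimal convolution the analogous equality can fail, as the two-VaR example preceding the theorem demonstrates, where a non-co-monotone non-negative allocation attains a strictly smaller aggregate risk than $\varrho_\Phi(X_0)$.
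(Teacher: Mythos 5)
Your derivation is correct and is exactly the intended one: the paper offers no separate proof of this corollary, presenting it as an immediate specialization of Theorem \ref{thm:main-bang-bang}, and your identification $\Psi=1-\max_i\Phi_i=1-\Phi$ together with the Choquet representation $\varrho_\Phi(X_0)=\int_0^\infty\bigl(1-\Phi(F_{X_0}(s))\bigr)ds$ for $X_0\ge 0$ is the whole content. Your closing caveat is also well taken: read literally with the positive infimal convolution $\boxdot$, the statement would contradict the paper's own two-VaR example, so the equality must indeed be understood for the co-monotone (admissible) allocations of Theorem \ref{thm:main-bang-bang}.
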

\begin{example}
Let us consider the example we discussed earlier. Let us consider
that there are two companies using $\mathrm{\varrho_{1}=VaR}_{\alpha}$
and $\varrho_{2}=\mathrm{VaR}_{\beta}$, where $\alpha<\beta$. It
is clear that since $\alpha<\beta$ one solution is $h_{1}=1$ and
$h_{2}=0$ and $\varrho_{1}\square\varrho_{2}(X_{0})=\mathrm{VaR_{\alpha}(X_{0})}$.\textit{}\\
\textit{}\\
\textit{Conflict of Interest: }The author declares that he has no
conflict of interest.
\end{example}

\end{document}